\providecommand{\npom}{Nešetřil and Ossona de Mendez}
\providecommand{\N}{\ensuremath{\mathbf{N}}}
\providecommand{\R}{\ensuremath{\mathbf{R}} }
\providecommand{\C}{\ensuremath{\mathcal{C}} }
\providecommand{\G}{\ensuremath{\mathcal{G}} }
\providecommand{\ivl}[3]{{#1}_{#2},\ldots,{#1}_{#3}}
\DeclareMathOperator{\ISub}{ISub}
\DeclareMathOperator{\Sub}{Sub}
\DeclareMathOperator{\Hom}{Hom}
\DeclareMathOperator{\AHom}{AHom}
\DeclareMathOperator{\sub}{sub}
\DeclareMathOperator{\isub}{isub}
\DeclareMathOperator{\dom}{dom}
\DeclareMathOperator{\Oh}{\textit{O}}
\newtheorem{thm}{Theorem}
\newtheorem{lemma}[thm]{Lemma}
\newtheorem{obser}[thm]{Observation}
\let\eps\varepsilon
\providecommand{\tnabla}{\buildrel\sim\over\nabla}
\begin{document}

\title{A dynamic data structure for counting subgraphs in sparse graphs\footnote{The work leading to this invention has received funding from the European Research Council under the European Union's Seventh Framework Programme (FP7/2007-2013)/ERC grant agreement no. 259385. The second author received support under project GAUK/592412 of Grant agency of Charles University.}}
\author{Zdeněk Dvořák\thanks{Computer Science Institute, Charles University, Prague, Czech Republic. {\tt rakdver@iuuk.mff.cuni.cz}.}
\and Vojtěch Tůma\thanks{Computer Science Institute, Charles University, Prague, Czech Republic. E-mails: {\tt voyta@iuuk.mff.cuni.cz}.}}

\maketitle

\begin{abstract}
We present a dynamic data structure representing a graph $G$, which allows addition
and removal of edges from $G$ and can determine the number of appearances
of a graph of a bounded size as an induced subgraph of $G$.  The queries are answered in constant time.
When the data structure is used to represent graphs from a class with bounded expansion (which
includes planar graphs and more generally all proper classes closed on topological minors, as well as
many other natural classes of graphs with bounded average degree), the amortized time complexity of updates is
polylogarithmic.
\end{abstract}

\section{Introduction}

The problem of determining whether a graph $H$ is an (induced) subgraph of another graph $G$ is NP-complete in general,
even if $H$ is a clique~\cite{karp}.  Furthermore, it is even $W[1]$-hard when parameterized by $H$~\cite{fellows}, and consequently
it is unlikely to admit an algorithm with time complexity $f(|H|)|V(G)|^{\Oh(1)}$ for any function $f$.  
The best known general algorithms are based on matrix multiplication;
Nešetřil and Poljak~\cite{nepol} gave an $\Oh(|V(G)|^{\omega|V(H)|/3})$-time algorithm, where $\omega$ is the exponent in
the complexity of matrix multiplication.  This was subsequently refined in \cite{klokmil, eigran}.

However, when $G$ is somewhat restricted, the situation changes.  Eppstein~\cite{bib-eppstein99} found an algorithm to decide whether a fixed
graph $H$ is a subgraph of a planar graph $G$ in time $\Oh(|V(G)|)$ (where the multiplicative constant hidden in the $\Oh$-notation depends
on $H$).  The key property of planar graphs used in Eppstein's algorithm is \emph{locally bounded tree-width}---if $v$ is a vertex in a planar
graph $G$ and $G_{v,r}$ is the subgraph of $G$ induced by the vertices at the distance at most $r$ from $v$, then the tree-width of $G_{v,r}$
is bounded by a function of $r$ (Robertson and Seymour~\cite{rs3}).  Frick and Grohe~\cite{fg} proved that this property is sufficient
to decide not only the presence of a subgraph, but also all other properties expressible by a bounded size First Order Logic (FOL) formula, in almost linear time
(i.e., $\Oh(n^{1+\eps})$ for every $\eps>0$).  Let us remark that deciding existence of induced subgraphs is equivalent to deciding existential
FO properties.

Even the assumption of locally bounded tree-width can be relaxed.  \npom{}~\cite{grad1,npom-nd1} introduced \emph{classes of graphs with bounded expansion}
and \emph{nowhere-dense graph classes}; we give the definitions of these notions below.  Here, let us just note that every class with
bounded expansion is nowhere-dense, and that many natural classes of graphs with bounded average degree
(including proper minor-closed classes of graphs, classes of graphs with bounded maximum degree, classes of graphs excluding
a subdivision of a fixed graph, classes of graphs that can be embedded in a fixed surface with bounded number of crossings per each edge and
others, see~\cite{osmenwood}) have bounded expansion.

\npom{}~\cite{npom-old} gave a linear-time algorithm for testing whether a fixed graph is a subgraph of a graph from a class of graphs with bounded expansion.
For nowhere-dense classes, this algorithm runs in almost linear time.  Dvořák, Král' and Thomas~\cite{dkt-fol} extended this result
to all properties expressible in FOL, showing that such properties can be decided in linear time on any class with bounded expansion
(the nowhere-dense case is still open, although the result extends to classes of graphs with \emph{locally bounded expansion}, which generalizes
all previously known results).  Conversely, if a class of graphs $\C$ is closed on subgraphs and it is not nowhere-dense, then the subgraph
problem restricted to $\C$ is $W[1]$-hard (when parameterized by the subgraph).  This shows that the result of \npom{}~\cite{npom-old}
is essentially the best possible.

Dvořák, Král' and Thomas~\cite{dkt-fol} also provided a semidynamic data structure for the problem.  For a fixed first-order formula $\phi$
and a class of graphs $\C$ with bounded expansion, this data structure reprezents a graph $G\in \C$ and can be initialized in time $\Oh(|V(G)|)$.
The data structure enables us to test whether the graph satisfies $\phi$ in constant time.
The graph can be modified by adding and removing edges in constant time, but the edge additions are restricted:
we can only add edges that were removed before.

In this paper, we eliminate the restriction on edge additions; that is, our data structure allows addition of arbitrary edges,
subject to the restriction that the resulting graph still belongs to the considered (bounded expansion or nowhere-dense) class of graphs.
On the other hand, we only handle the case of subgraph testing,
not testing of general FO properties.  We actually deal with the counting version of the problem, i.e., determining how
many times does a fixed graph $H$ appear as an induced subgraph in the represented graph.  Let us now formulate the claim more precisely.

\begin{thm}\label{thm-main}
Let $H$ be a fixed graph and let $\G$ be a class of graphs.
There exists a data structure $\ISub_H(G)$ reprezenting a graph $G\in\G$ which supports the following operations.
\begin{itemize}
\item Determine the number of induced subgraphs of $G$ isomorphic to $H$.
\item Add an edge $e$, i.e., transform $\ISub_H(G)$ to $\ISub_H(G+e)$, under the assumption that $G+e$ is in $\G$. 
\item Delete an edge $e$, i.e., transform $\ISub_H(G)$ to $\ISub_H(G-e)$, under the assumption that $G-e$ is in $\G$.
\end{itemize}
If $\G$ has bounded expansion, then the time complexity of query and edge removal is $\Oh(1)$, while the amortized time complexity of edge addition
is $\Oh(\log^h|V(G)|)$, where $h=\binom{|V(H)|}{2}-1$.
The initialization of the structure can be done in $\Oh(|V(G)|)$ and the space complexity for the structure is $\Oh(|V(G)|)$.
If $\G$ is nowhere-dense, then the time complexity of query is $\Oh(1)$, the amortized time complexity of edge addition or removal is
$\Oh(|V(G)|^\eps)$, the time complexity of the initialization is $\Oh(|V(G)|^{1+\eps})$ and the space complexity is $\Oh(|V(G)|^{1+\eps})$,
for every $\eps>0$.
\end{thm}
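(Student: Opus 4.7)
The plan is to maintain $G$ together with an acyclic orientation $\vec G$ of bounded in-degree, and to store, for each vertex $v$, counts $N_v$ of induced copies of $H$ whose \emph{sink} (last vertex in the topological order underlying $\vec G$) is $v$. The structural input is that every graph in a bounded-expansion class $\G$ admits such an orientation with in-degree bounded by a constant $\Delta=\Delta(\G)$; moreover, suitable ``augmentations'' of $\vec G$ (obtained by adding edges for short oriented paths) remain of bounded in-degree, which is what allows all vertices of an induced $H$-copy to be located inside a bounded-radius in-neighborhood of its sink in $\vec G$. Queries then return $\sum_v N_v$, kept as a running total, in $\Oh(1)$.

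Edge removal of $e=uv$ affects only copies of $H$ that use $e$; these sit in bounded-size in-neighborhoods of a constant number of vertices, so after possibly reorienting $e$ to preserve the in-degree bound, the affected $N_v$ can be recomputed in $\Oh(1)$. Edge addition is the real difficulty: the new edge may violate the in-degree bound, forcing a cascade of reorientations along a short oriented path. I would use a dynamic orientation-maintenance subroutine that, under the bounded-expansion assumption, guarantees an amortized polylogarithmic number of flips per addition; each flipped edge then triggers a bounded-size cluster of $N_v$ recomputations, each local and $\Oh(1)$ per flip.

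To pin down the exponent $h=\binom{|V(H)|}{2}-1$, I would arrange the data structure hierarchically, with one layer per potential edge of $H$: peel off edges of the template $H$ one at a time, so that layer $k$ tracks induced copies of graphs on $|V(H)|$ vertices with $k$ prescribed edges and consults layer $k-1$ when a new $G$-edge appears between two candidate endpoints. With $\binom{|V(H)|}{2}$ potential edges (one of which is the edge just being added and does not require its own layer), this gives $h$ layers, each contributing a $\log|V(G)|$ factor via a standard binary-counter / logarithmic-method amortization. The main obstacle will be showing that the per-level $\Oh(\log|V(G)|)$ amortization composes cleanly across layers while the count updates remain local under all the flips triggered by one edge addition; the nowhere-dense case should then follow by replacing the $\Oh(\log)$ flip bound with $\Oh(|V(G)|^\eps)$ coming from the weaker augmentation bounds available there, propagating $\eps$-overhead into the initialization and per-update times as stated.
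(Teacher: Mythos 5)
Your high-level ingredients (bounded in-degree orientations, augmentations to shrink the ``search radius'' around a designated vertex, a Brodal--Fagerberg-style dynamic orientation with amortized $\Oh(\log n)$ flips, and local recomputation per flip) are the right ones, but the mechanism that is supposed to produce the exponent $h=\binom{|V(H)|}{2}-1$ is not correct, and this is the crux of the whole theorem.

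In the paper, the exponent comes from \emph{iterated fraternal augmentation}: one maintains an orientation $G_0$ of $G$ and then $h-1$ further layers $G_1,\dots,G_{h-1}$, where $G_{i+1}$ consists of edges added to resolve forks in $G_0\cup\cdots\cup G_i$, each layer independently kept oriented with bounded in-degree by Brodal--Fagerberg. A single edge insertion causes $\Oh(\log n)$ flips at level $0$; each flip at level $i$ can create or destroy $\Oh(D)$ fork edges at level $i+1$, each of which costs $\Oh(\log n)$ amortized there; composing through all $h$ levels yields $\Oh(D^{h-1}\log^{h}n)$, and the number $h$ of levels is $\binom{|V(H)|}{2}-1$ because that many augmentation steps suffice to reach an \emph{elder} graph (one with no forks), after which all vertices of a copy of $H$ are reachable from a single root by short directed paths. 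Your proposed layering---``one layer per potential edge of $H$, peeling template edges one at a time, with binary-counter amortization''---is a genuinely different scheme, and you give no mechanism by which it would make the update cost local or composable: when a $G$-edge flips orientation, there is no obvious reason why only a bounded-size piece of ``layer $k$'' (copies of graphs with $k$ prescribed edges) changes, and the ``binary counter'' intuition does not attach to anything concrete in your setup. Without the iterated-augmentation cascade, the exponent $h$ is not explained, and the claim that each flip triggers only $\Oh(1)$ recomputation does not survive once the changes propagate through augmentation layers.

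Two further gaps. First, your notion of ``sink of a copy of $H$'' is not the right locality anchor: in the augmented (elder) graph the distinguished vertex is one from which \emph{all} other vertices are reachable by directed paths, i.e.\ a source-like root, and even then the paper needs the additional structure of an outbranching $T$ with the ``vineyard'' property and the bookkeeping of \emph{clans} and their \emph{ghosts} to make the counting updates local and correct; it is not enough to tabulate a single per-vertex count $N_v$. Second, you count induced copies of $H$ directly, whereas the paper first passes from induced subgraphs to subgraphs (inclusion--exclusion over non-edges of $H$), then from subgraphs to homomorphisms (summing over projections/quotients of $H$), and only then to homomorphisms from $0$-contractions of augmentations of $H$ into the augmented host. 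These reductions are what make the final counting problem decompose multiplicatively over clans, which is exactly what you would need to justify the ``$\Oh(1)$ per local recomputation'' you assert. As written, your plan has the right flavor but omits the machinery that actually makes both the exponent and the locality claims true.
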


A related data structure was previously obtained by Eppstein et al.~\cite{epphin}.
The \emph{h-index} of a graph $G$ is the largest integer $h$ such that $G$ has at least $h$ vertices of degree at least $h$.
Let $\G_h$ denote the class of graphs with h-index at most $h$. 
The data structure of et al.~\cite{epphin} makes it possible to determine the number of all induced subgraphs with at most four vertices in constant time,
with time complexity $\Oh(h^2)$ per modification if it is used to reprezent a graph in $\G_h$.
Note that the class $\G_h$ is closed on topological minors, and thus it has bounded expansion.  Therefore, Theorem~\ref{thm-main} generalizes this result,
but it has somewhat worse time complexity per operation.

The rest of the paper is organized as follows.  First, we give some definitions and auxiliary results needed in the rest of the paper.
Section~\ref{sec-bexp} contains the detailed description of the data structure for induced subgraphs.
Section~\ref{sec:logi} discusses extensions to relational structures and existential FOL properties.

\section{Definitions and auxiliary results}

The graphs considered in this paper are simple, without loops or parallel edges, unless specified otherwise.
For directed graphs, we also do not allow edges joining a single pair of vertices in opposite directions.

The classes of graphs with bounded expansion were introduced by \npom{} in \cite{npom-old}. 
A graph $H$ is said to be a \emph{minor of depth $r$} of a graph $G$, if 
it can be obtained from a subgraph of $G$ by contracting vertex-disjoint subgraphs of radius at most $r$ into single vertices, with arising parallel edges and loops suppressed.
The \emph{Greatest Reduced Average Density} at depth $r$ of graph $G$ then denotes the value
$$\nabla_r(G) = \max\{|E(H)|/|V(H)| : H\mbox{ is a minor of depth $r$ of }G\}.$$
A graph $G$ has \emph{expansion bounded by $f$}, if $f$ is a function from $\N$ to $\R^+$ and $\nabla_r(G) \leq f(r)$ for every $r$.
A class of graphs \G has \emph{bounded expansion}, if there is a function $f$ such that every graph in \G has expansion bounded by $f$.
Let us note that the average degree of a graph $G$ is at most $2\nabla_0(G)$; hence, graphs in any class of graphs with expansion bounded by $f$ have average degree bounded by a constant $2f(0)$.
Similarly, we conclude that every $G\in \G$ has an orientation (even acyclic one) with in-degree at most $D = 2f(0)$.

The nowhere dense classes introduced in~\cite{npom-nd1,npom-nd2} that generalize classes with bounded expansion can be defined in a similar manner---a class \G is \emph{nowhere dense}, if
there is a function $f\colon \N\to \N$ such that the clique number of any minor of depth $r$ of $G$ is at most $f(r)$.
The average degree of graphs with $n$ vertices in a nowhere-dense class is $n^{o(1)}$, i.e., for any nowhere-dense class $\G$ and for every $\eps>0$
there exists a function $g(n)=\Oh(n^\eps)$ such that every graph $G\in \G$ has average degree at most $g(|V(G)|)$. Note that unlike the case
of bounded expansion, the average degree does not have to be bounded by a constant.  More generally, there exists a function $h(n,r)$
such that $h(n,r)=\Oh(n^\eps)$ for every fixed $r$ and the expansion of every graph $G\in\G$ is bounded by the function $f(r)=h(|V(G)|,r)$.

These two concepts of sparsity turned out to be very powerful.
They are very robust, including many widely used classes of sparse graphs, as well as graphs obtained from them by minor perturbations (lexicographical product
with a clique of bounded size, \ldots).
Furthermore, many results regarding colorings, existence of small separators and various kinds of decompositions that hold for specific graph classes (say planar graphs)
generalize to this setting in some form.
Sometimes, using these concepts leads to simpler proofs and algorithms, as they necessarily avoid use of any deep structural theory.
We refer the reader to surveys~\cite{dk-surv,nessurvey} for more information on the subject.

Suppose that $G$ is a directed graph. Vertices $u, v\in V(G)$ form a \emph{fork} if $u$ and $v$ are distinct and non-adjacent and there exists $w\in V(G)$ with $(u,w), (v,w)\in E(G)$.
Let $G'$ be a graph obtained from $G$ by adding the edge $(u,v)$ or $(v,u)$ for every pair of vertices $u$ and $v$ forming a fork.
Then $G'$ is called a \emph{fraternal augmentation} of $G$.  Let us remark that a directed graph can have several different fraternal augmentations,
depending on the choices of directions of newly added edges.
If $G$ has no fork, then $G$ is called \emph{elder graph}.
For an undirected graph $G$, a \emph{$k$-th augmentation of $G$} is a directed graph $G'$ obtained from an orientation of $G$
by iterating fraternal augmentation (for all forks) $k$ times.
Note that $(\binom{|V(G)|}{2} - 2)$-th augmentation of $G$ is an elder graph,
because any graph with at most $1$ edge is already elder and fraternal augmentation of a non-elder graph adds at least one edge.

The following result of \npom~\cite{grad1} shows that fraternal augmentation preserve bounded expansion and nowhere-denseness.
\begin{thm} \label{thm:fraternal_be}
There exist polynomials $f_0$, $f_1$, $f_2$, \ldots with the following property.
Let $G$ be a graph with expansion bounded by a function $g$ and let $G_1$ be an orientation of $G$ with in-degree at most $D$.
If $G'$ is the underlying undirected graph of a fraternal augmentation of $G_1$, then
$G'$ has expansion bounded by the function $g'(r) = f_r(g(2r+1),D)$.
\end{thm}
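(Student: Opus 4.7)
The plan is to show that any shallow minor of $G'$ can be \emph{lifted} to a slightly deeper minor of $G$, so that a bound on $\nabla_{2r+1}(G)$ translates directly into a bound on $\nabla_r(G')$. The starting observation is that every fraternal edge $uv$ of $G'$ comes from a pair of edges $(u,w),(v,w)\in E(G_1)$, so replacing each fraternal edge by its length-two witness path in $G_1$ exhibits $G'$ as a subgraph of a kind of ``square'' of $G_1$. Consequently, any connected subgraph $S$ of $G'$ of radius $\le r$ can be expanded to a connected subgraph $T$ of $G$ of radius at most $2r+1$ by adjoining, for each fraternal edge used on a BFS-tree of $S$, its witness midpoint $w\in V(G)$.

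I would then fix a minor $H$ of depth $r$ of $G'$ witnessed by disjoint connected subgraphs $S_1,\dots,S_k\subseteq V(G')$, and build the corresponding expanded subgraphs $T_1,\dots,T_k\subseteq V(G)$. The main obstacle is that the $T_i$'s are not automatically disjoint: a midpoint $w$ needed to lift a tree-edge inside $S_i$ may coincide with a vertex of some $S_j$, with a midpoint used inside $S_j$, or with a witness chosen for some cross-bag edge of $H$. This is where the in-degree bound enters: since every vertex $w\in V(G_1)$ has at most $D$ in-neighbours, $w$ can serve as witness for at most $\binom{D}{2}$ fraternal edges and hence is needed by at most $\binom{D}{2}$ of the lifting steps across all bags. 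A local reassignment (peel overused midpoints out of the $T_i$'s and put each one in its own singleton bag) then produces pairwise disjoint bags $T'_1,\dots,T'_k$ plus at most $\mathrm{poly}(D)\cdot k$ singleton bags, all of radius at most $2r+1$ in $G$.

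Each edge of $H$ is now realised in the lifted structure either by an original edge of $G$ joining two bags, or by a length-two path through a midpoint bag, so $H$ becomes a minor of depth at most $2r+1$ of $G$ on $k' \le \mathrm{poly}(D)\cdot k$ vertices carrying at least $|E(H)|$ edges. Applying $\nabla_{2r+1}(G)\le g(2r+1)$ to this minor yields
\[
 \frac{|E(H)|}{|V(H)|}\le \frac{k'\cdot g(2r+1)}{k}\le p\bigl(g(2r+1),D\bigr)
\]
for an explicit polynomial $p$, and taking the maximum over all depth-$r$ minors $H$ of $G'$ gives the polynomial $f_r$ promised by the theorem.

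The hardest step is clearly the disjointness fix-up: one must simultaneously track midpoints shared between distinct lifted bags and midpoints chosen as edge-witnesses, and verify that after the peel-off one still obtains a legal minor of $G$ whose radius has grown only by the single extra unit encoded in ``$2r+1$''. Once this bookkeeping is done carefully, the polynomial blow-up factor is a fixed polynomial in $D$ whose degree depends only on $r$, exactly as required by the statement.
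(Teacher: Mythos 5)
The paper does not actually prove Theorem~\ref{thm:fraternal_be}; it is quoted as a result of Ne\v{s}et\v{r}il and Ossona de Mendez and used as a black box, so there is no in-paper proof to compare against. On its own merits, your high-level plan --- lift a depth-$r$ minor of $G'$ to a slightly deeper minor of $G$ by replacing each fraternal edge with its length-two witness path, and control collisions via the fact that a vertex of in-degree $\le D$ in $G_1$ witnesses at most $\binom{D}{2}$ fraternal edges --- is indeed the natural approach and isolates the right key quantity. But the disjointification step, which you correctly flag as the hard part, contains a genuine gap as written.

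The proposed fix, ``peel overused midpoints out of the $T_i$'s and put each one in its own singleton bag,'' fails for two reasons. First, removing a midpoint $w$ from $T_i$ in general disconnects $T_i$: $w$ was added precisely as the internal vertex of the two-edge path replacing a fraternal tree edge of a spanning tree of $S_i$, so it is a cut vertex of the expanded tree (possibly of degree up to $2\binom{D}{2}$). After peeling, $T'_i$ need not be connected, so the family $T'_1,\dots,T'_k$ together with singletons is not a legitimate model of a minor, and there is no graph ``on $k'$ vertices carrying at least $|E(H)|$ edges'' obtained this way. One must instead fragment each $T_i$ into its connected pieces, which creates not ``$\mathrm{poly}(D)\cdot k$ singleton bags'' but a number of branch sets governed by the number of fraternal tree edges used, i.e.\ up to $\sum_i(|S_i|-1)$, which can be of order $|V(G)|\gg k$. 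Second, even setting connectivity aside, the bound ``$k'\le\mathrm{poly}(D)\cdot k$'' is not justified: the in-degree bound controls how often a given vertex is \emph{reused} as a witness, not how many distinct witnesses are needed across all bags. Without a bound of that form, the final inequality $|E(H)|/k \le (k'/k)\cdot g(2r+1)$ does not close. Resolving the overlap correctly is exactly the nontrivial content of the Ne\v{s}et\v{r}il--Ossona de Mendez proof (they route through the topological grad $\tilde\nabla_r$ and/or a lemma on stability of $\nabla_r$ under lexicographic product with a small clique, which handles branch sets overlapping with bounded multiplicity); your sketch treats this as routine bookkeeping, and that is where the argument breaks.
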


The fraternal augmentations are a basic tool for deriving properties of graphs with bounded expansion,
e.g., existence of low tree-depth colorings (see \cite{npom-old} for a definition).  Once such a coloring
is found, the subgraph problem can be reduced to graphs with bounded tree-width, where it can be easily
solved in linear time by dynamic programing.  However, we do not know how to maintain a low tree-depth coloring
dynamically (indeed, not even an efficient data structure for maintaining say a proper $1000$-coloring of a planar graph
during edge additions and deletions is known).  The main contribution of this paper is showing that we can count
subgraphs using just the fraternal augmentations, which are much easier to update.

To maintain orientations of a graph, we use the following result by Brodal and Fagerberg \cite{dynds}:
\begin{thm} \label{thm:dynds}
There exists a data structure that, for a graph $G$ with $\nabla_0(G)\le d$, maintains an orientation with maximum in-degree
at most $4d$ within the following bounds:
\begin{itemize}
\item an edge can be added to $G$ (provided that the resulting graph $G'$ still satisfies $\nabla_0(G')\le d$) in
an amortized $\Oh(\log n)$ time, and
\item an edge can be removed in $\Oh(1)$ time, without affecting the orientation of any other edges.
\end{itemize}
The data structure can be initialized in time $O(|V(G)|+|E(G)|)$.
During the updates, the edges whose orientation has changed can be reported in the same time bounds. 
The orientation is maintained explicitly, i.e., each vertex stores a list of in- and out-neighbors.
\end{thm}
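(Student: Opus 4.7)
The plan is to maintain the orientation via explicit in- and out-adjacency lists together with a path-reversal rebalancing strategy for edge additions. The starting observation is that $\nabla_0(G)\le d$ implies every non-empty subgraph of $G$ has average degree at most $2d$, hence $G$ is $2d$-degenerate. For initialization I would repeatedly peel off a vertex $v$ of current degree at most $2d$, directing all of $v$'s remaining incident edges into $v$; processing vertices in the resulting elimination order yields an acyclic orientation of in-degree at most $2d$, computable in $\Oh(|V(G)|+|E(G)|)$ time using bucket priority queues indexed by current degree. Edge removal is then handled by simply deleting the edge from both endpoints' lists in $\Oh(1)$ time: in-degrees can only decrease, so no further re-orientation is needed, and no other edge changes direction.

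For insertion of an edge $\{u,v\}$, I would orient it into whichever endpoint currently has the smaller in-degree and update the two lists in $\Oh(1)$. If the chosen head $w$ now exceeds the target in-degree bound $4d$, perform a \emph{reverse search}: starting from $w$, do a BFS in the reverse of the in-orientation (so each newly explored vertex is an in-neighbor of an already-explored one), looking for a vertex $z$ whose in-degree is at most $2d$. Since the explored subgraph still has $\nabla_0\le d$ and is therefore $2d$-degenerate, such a $z$ must exist. Reversing every edge along the BFS-tree path from $w$ to $z$ leaves the in-degrees of intermediate vertices unchanged, decreases $d^-(w)$ by one, and increases $d^-(z)$ by one while keeping it at most $2d+1$. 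Iterating this step as long as some vertex has in-degree above $4d$ restores the invariant, and the list of flipped edges is produced as a direct byproduct of the traversal.

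The main obstacle, and the technical core of the argument, is bounding the amortized cost of the reverse searches by $\Oh(\log n)$. I would use a convex potential of the form $\Phi=\sum_v\phi(d^-(v))$ with $\phi$ rising sharply as $d^-(v)$ approaches the threshold, designed so that each step of a reversal cascade releases at least constant potential and thus pays for the $\Oh(1)$ adjacency-list work it performs. The logarithmic factor arises from the sparsity hypothesis: because the graph visited in the first $k$ reverse-BFS levels must itself satisfy $\nabla_0\le d$, a level that contains only vertices of in-degree above $2d$ forces its size to roughly double when passing to its in-neighbors, so after $\Oh(\log n)$ levels one necessarily reaches a vertex of low in-degree. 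Balancing the per-insertion potential increase, which is $\Oh(\log n)$, against the worst-case cascade length gives the claimed $\Oh(\log n)$ amortized bound; the linear space bound follows from storing only the two neighbor lists per vertex.
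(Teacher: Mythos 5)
This theorem is not proved in the paper at all; it is quoted verbatim from Brodal and Fagerberg~\cite{dynds}, whose algorithm works by a different mechanism than yours. Brodal and Fagerberg orient the new edge, and whenever some vertex $w$ exceeds the in-degree threshold they reverse \emph{all} in-edges of $w$ simultaneously, recursing on any vertices that overflow as a result; the $\Oh(\log n)$ amortized bound then comes from a global charging argument against an (implicit) offline witness orientation. Your proposal instead reverses a single directed path to a low-in-degree vertex. That part of your design is sound: in the set $S$ reached by the reverse search every in-neighbor of a vertex of $S$ is again in $S$, so if every vertex of $S$ had in-degree $>2d$ you would get $|E(G[S])|>2d|S|$, contradicting $\nabla_0(G)\le d$; hence a suitable $z$ exists, and flipping the $w$--$z$ tree path restores the in-degree bound with no further cascading (so, incidentally, your ``iterate as long as some vertex has in-degree above $4d$'' is unnecessary).

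The gap is in the amortized time bound. Your doubling argument correctly shows the reverse BFS reaches a low-in-degree vertex within $\Oh(\log n)$ \emph{levels}, but it says nothing about the \emph{number of vertices} examined before $z$ is found: those $\Oh(\log n)$ levels can contain $\Theta(n)$ vertices, and the search has to touch their incident edges. Your proposed potential $\Phi=\sum_v\phi(d^-(v))$ changes only at the two endpoints $w$ and $z$ (intermediate in-degrees are preserved by the path flip), so it releases only a constant amount per insertion; a constant plus your claimed $\Oh(\log n)$ ``per-insertion potential increase'' cannot pay for a search visiting $\Theta(n)$ vertices. Indeed, path-reversal schemes of exactly this flavor appear in the literature (e.g.\ Kowalik), but they trade the constant-factor degree bound for an $\Oh(d\log n)$ bound precisely in order to make the search cheap; getting both in-degree $\Oh(d)$ and amortized $\Oh(\log n)$ updates is the content of Brodal--Fagerberg and requires their vertex-flipping cascade and its comparison-to-offline analysis, not a pathwise potential argument. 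As written, your insertion step is correct but its claimed running time is not established.
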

Let us remark that the multiplicative constants of the $\Oh$-notation in Theorem~\ref{thm:dynds} do not
depend on $d$, although the implementation of the data structure as described in the paper of Brodal and Fagerberg
requires the knowledge of $d$.

We use this data structure in the following setting.

\begin{thm}\label{thm:aug}
For every $k\ge 0$, there exists an integer $k'$ and a polynomial $g$ with the following property.
Let $\G$ be a class of graphs and $h(n,r)$ a computable function such that the expansion of every graph $G\in \G$
is bounded by $f(r)=h(|V(G)|,r)$.  There exists a data structure representing a $k$-th augmentation $\tilde{G}_k$ of
a graph $G\in G$ with $n$ vertices within the following bounds, where $D=g(h(n,k'))$:
\begin{itemize}
\item the maximum in-degree of $\tilde{G}_k$ is at most $D$,
\item an edge can be added to $G$ (provided that the resulting graph still belongs to $\G$) in
an amortized $\Oh(D\log^{k+1} n)$ time, and
\item an edge can be removed in $\Oh(D)$ time, without affecting the orientation of any other edges.
\end{itemize}
The data structure can be initialized in time $\Oh(Dn+t)$, where $t$ is the time necessary to compute $D$.
The orientation is maintained explicitly, i.e., each vertex stores a list of in- and out-neighbors.
\end{thm}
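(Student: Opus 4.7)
The plan is to proceed by induction on $k$. For the base case $k=0$, we apply Theorem~\ref{thm:dynds} to $G$ directly: since $\nabla_0(G)\le h(n,0)$, the Brodal--Fagerberg structure yields an orientation of $G$ with maximum in-degree $4h(n,0)$, supporting edge additions in amortized $\Oh(\log n)$ time, removals in $\Oh(1)$ time, and reporting all reorientations within these bounds. We choose $k'$ and the polynomial $g$ large enough that $D=g(h(n,k'))$ dominates every in-degree bound produced by the inductive construction below.

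For the inductive step, assume we already maintain the $(k-1)$-th augmentation $\tilde{G}_{k-1}$ via a structure $\Mr_{k-1}$ satisfying the theorem at level $k-1$. Let $H_k$ denote the underlying undirected graph of $\tilde{G}_k$; by iterating Theorem~\ref{thm:fraternal_be}, $\nabla_0(H_k)$ is bounded by a polynomial in $h(n,k'')$ for some $k''=k''(k)$, so Theorem~\ref{thm:dynds} applied to $H_k$ maintains an orientation $\tilde{G}_k$ of in-degree at most $D$ with the standard update bounds. To decide membership in $H_k$, we maintain, for every pair $\{u,v\}$ of distinct vertices with at least one common out-neighbor in $\tilde{G}_{k-1}$, a counter equal to the number of such common out-neighbors; the undirected edge $\{u,v\}$ belongs to $H_k$ exactly when $uv\in E(\tilde{G}_{k-1})$ or the counter is positive. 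At any moment there are $\Oh(Dn)$ nonzero counters, so they can be stored sparsely (for instance, attached to the BF edge records at level $k$).

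Updates propagate layer by layer. To insert an edge into $G$, we first invoke $\Mr_{k-1}$ and obtain a list $L$ of primitive changes reported at level $k-1$ (edge additions, removals, reorientations). For each addition of a directed edge $(a,b)$ in $L$ we iterate over the at most $D$ in-neighbors $c\ne a$ of $b$ in $\tilde{G}_{k-1}$ and increment the counter for $\{a,c\}$; removals decrement the corresponding counters; a reorientation of $(a,b)$ to $(b,a)$ is handled as the removal of $(a,b)$ followed by the insertion of $(b,a)$. Whenever a counter crosses from $0$ to $1$ we issue an edge insertion at the BF instance of level $k$, and symmetrically on $1\to 0$ crossings. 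Each BF operation at level $k$ may itself generate reorientations that feed the next layer of the induction. Edge removal in $G$ proceeds analogously, but since BF removal is $\Oh(1)$ and generates no reorientations, every layer costs only $\Oh(D)$.

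Let $C_i$ denote the amortized number of primitive changes reported by the BF instance at level $i$ per edge inserted into $G$; by Theorem~\ref{thm:dynds}, $C_0=\Oh(\log n)$, and each change at level $i-1$ triggers $\Oh(D)$ BF operations at level $i$, each of amortized cost $\Oh(\log n)$ and producing $\Oh(\log n)$ further reported changes, so $C_i=\Oh(D\log n\cdot C_{i-1})$. Summing over levels, the amortized cost of an edge insertion is dominated by level $k$ and yields the claimed bound in the bounded-expansion regime ($D$ constant, giving $\Oh(\log^{k+1} n)$) and the nowhere-dense regime ($D$ and all $\log$ factors are $n^{o(1)}$). Initialization is bottom-up, costing $\Oh(Dn)$ per level once $D$ has been computed (in time $t$). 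The hard part will be the counter bookkeeping: a fork edge at level $i+1$ may be witnessed simultaneously by several common out-neighbors at level $i$, so it must be retracted only when \emph{every} witness disappears; tying the level-$(i+1)$ BF operations strictly to $0/1$ transitions of the counters, rather than to individual fork events at level $i$, is what keeps the cascade well-defined and the BF invariants intact across all $k$ layers.
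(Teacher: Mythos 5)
Your overall architecture---cascading Brodal--Fagerberg instances layer by layer, tracking forks via out-neighbor counters, tying higher-level BF operations to $0/1$ counter transitions, and absorbing the accumulated in-degree products into $D$ by enlarging $k'$ and $g$---is exactly the paper's approach, and the recursion $C_i=\Oh(D\log n\cdot C_{i-1})$ matches the paper's $\Oh(d_0\cdots d_{k-1}\log^{k+1}n)$ count. But there is one genuine gap: you apply Theorem~\ref{thm:dynds} to the \emph{entire} underlying undirected graph $H_k$ and then declare the resulting orientation to be $\tilde{G}_k$. That orientation is free to flip edges that are already present in $\tilde{G}_{k-1}$, and once an old edge is reoriented the resulting digraph need no longer be a $k$-th augmentation of $G$ at all. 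Concretely, if $G$ is the path $u\!-\!w\!-\!v$ with $\Mr_0$-orientation $(u,w),(v,w)$, the fork $\{u,v\}$ is added, giving $H_1$ a triangle on $\{u,v,w\}$; a BF orientation of $H_1$ may output $(w,u),(w,v),(u,v)$, yet under that orientation of $G$ the pair $u,v$ is no fork, so $(u,v)$ has no business being there and the digraph is not a first augmentation of $G$. Since the rest of the construction (Lemma~\ref{lem:hom-corr} and the $\AHom$ structures) requires $\tilde{G}_k$ to be a genuine $h$-th augmentation, this cannot be waved away.

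The paper avoids this by keeping the levels edge-disjoint: it maintains a separate BF structure on $G'_{i+1}$, the graph whose edges are \emph{only} the fork pairs of $\tilde{G}_i$ (these are automatically disjoint from $E(\tilde{G}_i)$ because forks are non-adjacent), and the orientations of lower-level edges are never touched by the level-$(i+1)$ BF. The $k$-th augmentation is then $\tilde{G}_k=G_0\cup\cdots\cup G_k$, where each piece is oriented once and only reoriented by its own BF instance. To repair your proof, replace ``Theorem~\ref{thm:dynds} applied to $H_k$'' by ``Theorem~\ref{thm:dynds} applied to the new-fork-edge graph $G'_k=H_k\setminus E(\tilde{G}_{k-1})$'', keep the $\Mr_{k-1}$ orientation on the shared edges, and let $\tilde{G}_k=\tilde{G}_{k-1}\cup G_k$; everything else in your write-up (the counters, the $0/1$-transition discipline, the cost recursion, the choice of $k'$ and $g$) then goes through as intended. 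Note also that the expansion bound you feed into Theorem~\ref{thm:dynds} should be on $\nabla_0$ of the subgraph you actually orient; the paper's $q_{i}(0)$ bound (obtained via Theorem~\ref{thm:fraternal_be}) is for the whole union, so it also bounds the edge-disjoint piece $G'_{i}$ and this point is harmless, but worth stating explicitly.
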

\begin{proof}
Let $q_0(r)=h(n,r)$.
We use the data structure of Theorem~\ref{thm:dynds} to provide an orientation $G_0$ of $G'_0=G$ with maximum in-degree
at most $4q_0(0)$.  Assume inductively that we have already constructed pairwise edge-disjoint directed graphs
$G_0$, $G_1$, \ldots, $G_i$ with underlying undirected graphs $G'_0$, $G'_1$, \ldots, $G'_i$, such that the expansion of
$\tilde{G}'_i=G'_0\cup \ldots \cup G'_i$ is bounded by a function $q_i(r)$
and $G_i$ has maximum in-degree at most $4q_i(0)$.
We define $G'_{i+1}$ to be the graph with vertex set $V(G)$ and with edges corresponding to the forks in
$\tilde{G}_i=G_0\cup \ldots \cup G_i$.
Note that $\tilde{G}_i$ has maximum in-degree at most $d_i=4(q_0(0)+\ldots+q_i(i))$. By Theorem~\ref{thm:fraternal_be},
$\tilde{G}'_{i+1}=G'_0\cup \ldots \cup G'_{i+1}$ has expansion bounded by $q_{i+1}(r)=f_1(q_i(f_2(r)),d_i)$.
We use the data structure of Theorem~\ref{thm:dynds} to provide an orientation $G_{i+1}$ of $G'_{i+1}$ with
maximum in-degree at most $4q_{i+1}(0)$.

The data structure maintains the orientations $G_0$, $G_1$, \ldots, $G_k$ and their union $\tilde{G}_k$.
Observe that $\tilde{G}_k$ is a $k$-th augmentation of $G$.
Addition of an edge in $G$ may result in change of orientation of $\Oh(\log n)$ edges in $G_0$ (amortized),
which may result in addition or removal of $\Oh(d_0\log n)$ edges in $G'_1$.  Each of them results in
change of orientation of $\Oh(\log n)$ edges in $G_1$ and consequently addition or removal of $\Oh(d_1\log n)$
edges in $G_2$, etc.  Altogether, addition of an edge may result in $\Oh(d_0d_1\ldots d_{k-1}\log^{k+1} n)$ changes,
with the same time complexity.  Similarly, a removal of an edge may result in $\Oh(d_0d_1\ldots d_{k-1})$ changes.

Therefore, Theorem~\ref{thm:aug} holds, since we can choose the integer $k'$ and the polynomial $g$
so that $D\ge \max(d_k,d_0d_1\ldots d_{k-1})$.  Let us remark that we can assume that $D\le n^2$, as otherwise
the claim of the theorem is trivial; hence, the complexity of performing computations
with $D$ (once it was determined during the initialization) does not affect the time complexity of the operations.
\end{proof}

Let $G$ be a directed graph and $S$ a set of its vertices. 
Let $N^+_d(S)$ denote the set of vertices that are reachable from $S$ by a directed path of length at most $d$,
and let $N^+_\infty(S)$ we denote the set of vertices reachable from $S$ by a directed path of any length. 
Similarly, $N^-_d(S)$ and $N^-_\infty(S)$ denote the sets of vertices from that $S$ can be reached by a directed path of length
at most $d$ and by a directed path of any length, respectively.
We also use $N^+_d(v)$, $N^+_\infty(v)$, $N^-_d(v)$, $N^-_\infty(v)$ as shorthands for 
$N^+_d(\{v\})$, $N^+_\infty(\{v\})$, $N^-_d(\{v\})$, $N^-_\infty(\{v\})$, respectively.
We say that a directed graph is \emph{connected} if its underlying undirected graph is connected.
Similarly, \emph{connected components} of a directed graph are its subgraphs induced by vertex sets
of the connected components of its underlying undirected graph.

The key property of elder graphs is that they contain a vertex from that we can reach all other vertices by
directed paths.  Let us prove a stronger claim that we need in the design of our data structure.
A directed tree $T$ with all edges directed away from the root is called an \emph{outbranching}. 
The root of $T$ is denoted by $r(T)$.
Let $H$ be a supergraph of an outbranching $T$ with $V(H)=V(T)$, such that for every edge $(t_1,t_2)\in E(H)$,
there exists a directed path in $T$ either from $t_1$ to $t_2$ or from $t_2$ to $t_1$.
We call such a pair $(H,T)$ a \emph{vineyard}.

\begin{lemma}\label{lem:vine}
If $H$ is a connected elder graph, then there exists an outbranching $T\subseteq H$ such that
$(H,T)$ is a vineyard.
\end{lemma}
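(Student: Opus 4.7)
The plan is to prove, by induction on $|V(H)|$, the slightly stronger statement that for every vertex $r \in V(H)$ satisfying $N^+_\infty(r) = V(H)$, there exists an outbranching $T \subseteq H$ rooted at $r$ such that $(H, T)$ is a vineyard; the lemma follows once we exhibit such an $r$. For that, choose any $r$ maximizing $|N^+_\infty(r)|$ and set $W = N^+_\infty(r)$. If $W \neq V(H)$, connectedness of $H$ provides an edge between $W$ and its complement, necessarily oriented $u \to u'$ with $u \notin W$, $u' \in W$ (otherwise $u$ would lie in $W$). Taking a directed path $r = w_0 \to w_1 \to \ldots \to w_k = u'$ and walking back from $w_k$ to $w_0$, at each step $u$ and $w_j$ share the out-neighbor $w_{j+1}$, so by the elder property they are adjacent; since $w_j \to u$ would place $u$ in $W$, the edge must go $u \to w_j$. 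Iterating down to $j = 0$ gives $u \to r$, hence $N^+_\infty(u) \supsetneq N^+_\infty(r)$, contradicting maximality.

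For the inductive step with $|V(H)| \geq 2$, let $C_1, \ldots, C_m$ be the connected components of $H - r$; each is again a connected elder graph. I plan to find, for each $C_i$, an out-neighbor $v_i$ of $r$ in $C_i$ such that $v_i$ reaches every vertex of $C_i$ along directed paths inside $C_i$; the inductive hypothesis applied to $(C_i, v_i)$ then yields an outbranching $T_i$ rooted at $v_i$ with $(C_i, T_i)$ a vineyard, and I assemble $T = \{r \to v_i : 1 \le i \le m\} \cup \bigcup_i T_i$. This $T$ is an outbranching of $H$ rooted at $r$, and the vineyard property is immediate: edges inside each $C_i$ are inherited from $T_i$, edges incident to $r$ are handled because $r$ is the root and hence an ancestor of every other vertex, and no edges run between distinct components of $H - r$.

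The main obstacle is producing the $v_i$, which will reduce to showing that each $C_i$ has a \emph{unique} source strongly connected component. Assuming for contradiction two distinct source SCCs $A, B$, take a shortest undirected path $u_0 \in A, u_1, \ldots, u_k \in B$ with no intermediate vertex in $A \cup B$. Since a source SCC admits no incoming edges from outside, the path must start $u_0 \to u_1$ and end $u_k \to u_{k-1}$, so somewhere its consecutive edges transition from forward to backward orientation, exposing a vertex $u_{j+1}$ with both $u_j$ and $u_{j+2}$ as in-neighbors. The elder property then supplies an edge $u_j u_{j+2}$, which shortcuts the path and contradicts minimality. Let $A_i$ denote the unique source SCC of $C_i$. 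Any directed path in $H$ from $r$ to a vertex of $C_i$ takes its first step from $r$ to some vertex $v \in C_i$ and then remains in $C_i$ (since $C_i$ is separated from the rest of $H - r$); to reach a vertex of $A_i$ this way the path must already begin inside $A_i$, since $A_i$ has no in-neighbors in $V(C_i) \setminus A_i$. Because $r$ reaches every vertex of $A_i$, we obtain an out-neighbor $v_i \in A_i$ of $r$; and as $A_i$ is the source SCC of $C_i$, this $v_i$ reaches all of $V(C_i)$ within $C_i$, completing the induction.
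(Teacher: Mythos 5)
Your proof is correct, and it takes a genuinely different route from the paper's. The paper's proof removes a \emph{non-cutvertex} $v$ (so that $H-v$ remains connected), invokes the inductive hypothesis to get a vineyard $(H-v,T')$, and then grafts $v$ back in by a delicate rearrangement: it shows all in-neighbors of $v$ lie on a single root-path $Q$ of $T'$, detaches the subtrees of $T'-V(Q)$ that contain out-neighbors of $v$, and reroots them under $v$. Your proof instead singles out a vertex $r$ with $N^+_\infty(r)=V(H)$ --- the vertex that will end up as the root --- and deletes \emph{it}, accepting that $H-r$ may fall apart into components $C_1,\dots,C_m$. Because no edges of $H$ join distinct $C_i$'s and every remaining edge either lies inside some $C_i$ or is incident to the root, the vineyard property of the assembled tree $T=\{(r,v_i)\}\cup\bigcup_i T_i$ follows with essentially no work. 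The price you pay is a strengthened induction hypothesis (prescribing the root) plus two auxiliary facts proved by the same fork-shortcutting argument the paper uses to build $Q$: first, that a vertex of maximum $|N^+_\infty(\cdot)|$ reaches everything; second, that each $C_i$, being a connected elder graph, has a unique source strongly connected component $A_i$, into which any directed path from $r$ must enter at its first step, yielding the required out-neighbor $v_i\in A_i$ of $r$. I checked the fork arguments in both places and they are sound (in particular the reachability argument correctly shows that the $r\to v\to\cdots\to y$ path stays within $A_i$ once it enters $C_i$, and the shortcut argument correctly preserves the ``no intermediate vertex in $A\cup B$'' invariant). Compared to the paper's tree surgery, your approach is more modular: it isolates a clean structural fact about elder graphs (unique source SCC and hence a canonical choice of root in each component) and it anticipates the paper's later observation that $N^+_\infty(v)$ is a clan, whereas the paper's argument is shorter but requires careful bookkeeping to verify that the rerooted tree still satisfies the vineyard condition for all edges of $H-v$.
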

\begin{proof}
The claim is obvious if $|V(H)|\le 2$.  Therefore, suppose that $|V(H)|\ge 3$.  By induction,
we can assume that the claim holds for all graphs with less than $|V(H)|$ vertices.
Let $v$ be a vertex of $H$ such that $v$ is not a cutvertex in the underlying undirected
graph of $H$.  Note that every induced subgraph of an elder graph is elder, and thus
by the induction hypothesis, there exists an outbranching $T'$ such that
$(H-v,T')$ is a vineyard.  If $(v,r(T'))\in E(H)$, then we can let $T$ consist of $T'$ and the edge
$(v,r(T'))$.  Therefore, assume that $r(T')\not\in N^+_1(v)$.

Consider a vertex $w\in N^+_1(v)$, and let $r(T')=w_0,w_1,\ldots, w_k=w$ be the directed path in $T$
from $r(T')$ to $w$.  Since both $(w_{k-1},w_k)$ and $(v,w_k)$ are edges of an elder graph $H$,
it follows that either $(w_{k-1},v)$ or $(v,w_{k-1})$ is an edge of $H$.  In latter case,
we can repeat this observation.  Since $r(T')\not\in N^+_1(v)$, we conclude that there exists
$i$ with $0\le i\le k-1$ such that $w_i\in N^-_1(v)$ and $w_j\in N^+_1(v)$ for all $j$ with $i+1\le j\le k$.

In particular, since $G$ is connected, $N^-_1(v)$ is not empty.  
If $u_1$ and $u_2$ are distinct vertices in $N^-_1(v)$, then since $H$ is an elder graph, there
exists an edge joining $u_1$ with $u_2$.  Since $(H,T)$ is a vineyard, there exists a directed
path $Q\subseteq T'$ starting in $r(T')$ such that $N^-_1(v)\subseteq V(Q)$ and the endvertex $z$
of $Q$ belongs to $N^-_1(v)$.

Let $T_1$, \ldots, $T_m$ be all components of $T'-V(Q)$ containing at least one neighbor of $v$.  As we observed
before, we have $(v,r(T_i))\in E(H)$ for $1\le i\le m$.  Let $T$ be the outbranching
obtained from $T'$ by removing the incoming edges of $r(T_1)$, \ldots, $r(T_m)$
and adding the edges $(z,v)$, $(v,r(T_1))$, \ldots, $(v,r(T_m))$.

All neighbors of $v$ belong either to one of the trees $T_1$, \ldots, $T_m$ or to $Q$, and thus they are
joined to $v$ by a directed path in $T$.  Consider an edge $(x,y)\in E(H-v)\setminus E(T)$.  If neither
$x$ nor $y$ belongs to $X=V(T_1)\cup V(T_2)\cup \ldots\cup V(T_m)$, then the path in $T'$ joining $x$ and $y$
also appears in $T$.  If both $x$ and $y$ belong to $X$, then since $(H-v,T')$ is a vineyard, there
exists $i$ (with $1\le i\le m$) such that $x,y\in V(T_i)$, and the path joining $x$ and $y$ in $T'$ also
appears in $T$.  Finally, suppose that say $x$ belongs to $T_1$ and $y$ does not belong to $X$.
Since $(H-v,T')$ is a vineyard, we have $y\in V(Q)$, and $x$ and $y$ are joined in $T$ by the path consisting
of the subpath of $Q$ from $y$ to $z$, the path $zvr(T_1)$ and the path from $r(T_1)$ to $x$ in $T_1$.
Therefore, $(H,T)$ is a vineyard.
\end{proof}

It turns out to be convenient to work with graphs with colored edges.
We do not place any restrictions on the coloring; in particular, edges incident with the same vertex can have the same color.
Suppose that $H$ and $G$ are graphs with colored edges.
A mapping $\phi\colon V(H)\to V(G)$ is a \emph{homomorphism} if for every edge $uv\in E(H)$, we have that $\phi(u)\phi(v)$ is an edge of $G$ of the same
color as $uv$ (and in particular, $\phi(u)\neq \phi(v)$).  A homomorphism is a \emph{subgraph} if it is injective.  It is an \emph{induced subgraph} if it is injective
and $\phi(u)\phi(v)\in E(G)$ implies $uv\in E(H)$, for every $u,v\in V(H)$.
Let $\hom(H,G)$, $\sub(H,G)$ and $\isub(H,G)$ denote the number of homomorphisms, subgraphs and induced subgraphs, respectively, of $H$ in $G$.
Let us note that the definitions of subgraph and induced subgraph distinguish the vertices, i.e., $\sub(H,H)=\isub(H,H)$ is equal to the number of
automorphisms of $H$.

Similarly, if $H$ and $G$ are directed graphs with colored edges, a mapping $\phi\colon V(H)\to V(G)$ is a \emph{homomorphism} if $(u,v)\in E(H)$ implies that $(\phi(u),\phi(v))$
is an edge of $G$ of the same color as $uv$, and $\hom(H,G)$ denotes the number of homomorphisms from $H$ to $G$.

\section{Dynamic data structure for induced subgraphs}\label{sec-bexp}

In this section, we aim to design the data structure $\ISub$ as described in the introduction.  
More precisely, for any positive integer $k$, a fixed graph $H$ with edges colored by colors $\{1,\ldots, k\}$ and a class $\G$ of graphs, we design 
a data structure $\ISub_{H,k}(G)$ representing a graph $G\in \G$ with edges colored by $\{1,\ldots, k\}$, supporting the following operations.

\begin{itemize}
\item Determine $\isub(H,G)$.
\item Change a color of an edge.
\item Add an edge, i.e., transform $\ISub_{H,k}(G)$ to $\ISub_{H,k}(G+e)$, under the assumption that $G+\{e\}$ is in $\G$. 
\item Delete an edge, i.e., transform $\ISub_{H,k}(G)$ to $\ISub_{H,k}(G-e)$. 
\end{itemize}
The complexity of the operations depends on $\G$ and is discussed in more detail in Subsection~\ref{subsec-complex}.
To implement the data structure $\ISub_{H,k}(G)$, we first perform several standard transformations,
reducing the problem to counting homomorphisms.

\subsection{From induced subgraphs to subgraphs}

The data structure $\ISub_{H,k}(G)$ is based on a data structure $\Sub_{H',k}(G)$, which can be used to determine
the number of (not necessarily induced) subgraphs of $H'$ in $G$, i.e., the number $\sub(H',G)$.
The relationship is based on the following claim.

Let $H(+,i,k)$ denote the set of all graphs which can be obtained from $H$ by adding exactly $i$ new edges
and assigning them colors from $\{1,\ldots,k\}$.
\begin{lemma}\label{lemma-ind}
$$\isub(H,G) = \sum_{i=0}^{\binom{|V(H)|}{2} - |E(H)|} (-1)^i \sum_{H'\in H(+,i,k)} \sub(H',G).$$
\end{lemma}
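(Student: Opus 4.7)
The plan is a straightforward inclusion--exclusion over the set of non-edges of $H$, applied to injective edge-color-preserving maps.

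First I set up notation. Let $P$ denote the set of unordered pairs of distinct vertices of $V(H)$ that are non-adjacent in $H$, so $|P| = \binom{|V(H)|}{2} - |E(H)|$. Call an injective map $\phi\colon V(H)\to V(G)$ a \emph{partial embedding} if for every edge $uv\in E(H)$, $\phi(u)\phi(v)$ is an edge of $G$ of the same color. Let $S(\phi) = \{\{u,v\}\in P : \phi(u)\phi(v)\in E(G)\}$. Directly from the definition of induced subgraph, $\phi$ is an induced subgraph of $G$ if and only if $\phi$ is a partial embedding with $S(\phi) = \emptyset$.

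Next I apply the standard identity $\sum_{T\subseteq S}(-1)^{|T|} = [S=\emptyset]$, which yields
$$\isub(H,G) = \sum_{\phi\text{ partial embedding}}\sum_{T\subseteq S(\phi)}(-1)^{|T|} = \sum_{T\subseteq P}(-1)^{|T|}\, N(T),$$
where $N(T)$ counts partial embeddings $\phi$ such that $T\subseteq S(\phi)$, i.e., injective maps $\phi$ that preserve every edge of $H$ (with color) and moreover send each pair in $T$ to some edge of $G$ (of any color).

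The remaining step is to identify $N(T)$ with a sum of $\sub$-counts. For a fixed $T$, stratifying partial embeddings according to the colors of the images of the pairs in $T$, I can write
$$N(T) = \sum_{\chi\colon T\to\{1,\ldots,k\}} \sub(H_{T,\chi},G),$$
where $H_{T,\chi}$ is the graph obtained from $H$ by adding, for each $\{u,v\}\in T$, the edge $uv$ with color $\chi(\{u,v\})$. As $(T,\chi)$ ranges over all such pairs with $|T|=i$, $H_{T,\chi}$ ranges bijectively over $H(+,i,k)$. Substituting and grouping the terms by $i=|T|$ gives the claimed formula; the range $0\le i\le \binom{|V(H)|}{2}-|E(H)|=|P|$ matches exactly.

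There is no real obstacle here; the only thing to be careful about is the bijection between pairs $(T,\chi)$ and graphs in $H(+,i,k)$, and the clean separation between ``partial embedding'' (preserves edges of $H$ with colors) and the added constraint from $T$ (maps certain non-edges to edges of $G$, with color recorded by $\chi$). Everything else is a routine swap of summation order.
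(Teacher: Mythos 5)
Your proof is correct and takes essentially the same route as the paper: both apply inclusion--exclusion over the set of non-edges of $H$, with the paper phrasing it via the sets $A_{uv}$ of injective homomorphisms mapping a given non-edge to an edge of $G$ and your version phrasing it via the identity $\sum_{T\subseteq S}(-1)^{|T|}=[S=\emptyset]$ and the counts $N(T)=\bigl|\bigcap_{uv\in T}A_{uv}\bigr|$. Your explicit bookkeeping of the color map $\chi$ to establish the bijection between pairs $(T,\chi)$ and graphs in $H(+,i,k)$ is a small but welcome clarification of a step the paper leaves implicit.
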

\begin{proof}
Let $\overline{E}$ be the set of all unordered pairs of vertices of $H$ that are not adjacent.
For each pair $uv\in \overline{E}$ such that $uv\not\in E(H)$, let $A_{uv}$ denote
the set of all injective homomorphisms $\phi\colon V(H)\to V(G)$ such that $\phi(u)\phi(v)\in E(G)$.
Observe that $$\isub(H,G)=\sub(H,G)-\left|\bigcup_{uv\in \overline{E}}A_{uv}\right|$$
and that for $1\le i\le |\overline{E}|$,
$$\sum_{H'\in H(+,i,k)} \sub(H',G)=\sum_{X\subseteq \overline{E}, |X|=i} \left|\bigcap_{uv\in X}A_{uv}\right|.$$
The claim of the lemma follows by the principle of inclusion and exclusion.
\end{proof}

The data structure $\ISub_{H,k}(G)$ consists of the collection of the data structures $\Sub_{H',k}(G)$ for all
$H'\in \bigcup_{i=0}^{|\overline{E}|} H(+,i,k)$.  The additions, removals and recolorings of edges of $G$
are performed in all of the data structures, and $\isub(H,G)$ is determined from the queries for $\sub(H',G)$
using the formula from Lemma~\ref{lemma-ind}.  The complexity of each operation with $\ISub_{H,k}(G)$
is thus at most $2^{|V(H)|^2} = \Oh(1)$ times the complexity of the corresponding operation with $\Sub_{H',k}(G)$ for some
graph $H'$ with $|V(H')|=|V(H)|$.

\subsection{From subgraphs to homomorphisms}
Next, we aim to base the data structure $\Sub_{H,k}(G)$ on a data structure $\Hom_{H',k}(G)$,
which counts the number $\hom(H',G)$ of homomorphisms from $H'$ to $G$.  Furthermore, we want
to restrict our attention to the case that $H'$ is connected.

Consider a graph $H$ with colored edges, and let $P$ be a partition of $V(H)$ such that
\begin{itemize}
\item each element of $P$ induces an independent set in $H$, and
\item for every $p_1,p_2\in P$, $u,u'\in p_1$ and $v,v'\in p_2$, if
both $uv$ and $u'v'$ are edges of $H$, then $uv$ and $u'v'$ have the same color.
\end{itemize}
Let $H'$ be the graph obtained from $H$ by identifying the vertices in each part of $P$
and suppressing the parallel edges.  We say that $H'$ is a \emph{projection} of $H$.
Let ${\cal H}^{p}$ denote the set of all projections $H'$ of $H$.

\begin{lemma}\label{lem:hom}
For every graph $H$ with colored edges, there exist integer coefficients $\alpha_{H'}$ such that for every graph $G$ with colored
edges,
$$\sub(H,G) = \sum_{H'\in {\cal H}^{p}} \alpha_{H'} \hom(H',G).$$
\end{lemma}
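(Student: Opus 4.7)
The plan is to classify each homomorphism $\phi\colon V(H)\to V(G)$ by its fiber partition and then invert the resulting identity by induction on $|V(H)|$. For any such $\phi$, let $P_\phi$ be the partition of $V(H)$ whose blocks are the non-empty fibers of $\phi$. Because $G$ has no loops, no edge of $H$ can lie inside a block, so each block is independent in $H$; because $\phi$ preserves edge colors, whenever $u,u'$ lie in one block and $v,v'$ in another and both $uv$ and $u'v'$ are edges of $H$, they map to the same edge of $G$ and hence share a color. Thus $P_\phi$ satisfies both conditions defining a projection, and determines an element $H/P_\phi\in{\cal H}^{p}$.

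Conversely, given any valid partition $P$, the homomorphisms $\phi$ with $P_\phi=P$ correspond bijectively, via composition with the quotient map $H\to H/P$, to injective homomorphisms from the projection $H/P$ into $G$, and hence their number is $\sub(H/P,G)$. Summing over all valid partitions gives
\begin{equation*}
\hom(H,G)=\sum_{P\text{ valid}}\sub(H/P,G)=\sub(H,G)+\sum_{P\neq P_{0}}\sub(H/P,G),
\end{equation*}
where $P_{0}$ denotes the partition into singletons, for which $H/P_{0}=H$. Solving for $\sub(H,G)$ yields
\begin{equation*}
\sub(H,G)=\hom(H,G)-\sum_{P\neq P_{0}}\sub(H/P,G).
\end{equation*}

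I then close the recursion by induction on $|V(H)|$. The base case $|V(H)|\le 1$ is immediate since then $\sub(H,G)=\hom(H,G)$ and ${\cal H}^{p}=\{H\}$. For any nontrivial valid partition $P$, the projection $H/P$ has strictly fewer vertices than $H$, so by the induction hypothesis $\sub(H/P,G)$ is an integer linear combination of $\hom(H'',G)$ ranging over projections $H''$ of $H/P$. The key structural remark is that projections compose: any projection of $H/P$ corresponds to a coarser valid partition of $V(H)$ and therefore itself belongs to ${\cal H}^{p}$. Substituting the inductive expressions back into the displayed formula produces an integer linear combination of $\hom(H',G)$ over $H'\in{\cal H}^{p}$, as desired, and integrality of the coefficients $\alpha_{H'}$ is preserved because only integer additions and subtractions are used at each step.

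The only step that requires genuine verification rather than routine checking is confirming that the two conditions defining a projection in the paragraph preceding the lemma are exactly those characterising the partitions that arise as fiber partitions of homomorphisms; this ensures that the basic identity truly sums over precisely ${\cal H}^{p}$ with no spurious terms. The composability of projections on which the induction depends, i.e.\ that refining or coarsening valid partitions preserves validity, is then transparent from the partition viewpoint.
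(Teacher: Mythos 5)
Your proof is correct and follows essentially the same route as the paper: both classify homomorphisms by their fiber partitions to obtain the identity $\hom(H,G)=\sum_{H'\in\mathcal{H}^p}\sub(H',G)$, rearrange to isolate $\sub(H,G)$, and close the recursion by induction on $|V(H)|$ using the fact that projections of projections of $H$ are again projections of $H$. You spell out a bit more carefully why fiber partitions satisfy the two defining conditions of a projection, but the underlying argument is the same.
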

\begin{proof}
Let $\phi\colon V(H)\to V(G)$ be a homomorphism. Note that $P=\{\phi^{-1}(v):v\in \dom(\phi)\}$ is a partition of $V(H)$
that gives rise to a projection $H'$ of $H$, and $H'$ appears as a subgraph in $G$.
Conversely, if a projection $H'$ of $H$ (given by a partition $P$ of $V(H)$) is a subgraph of $G$, then it corresponds to
a unique homomorphism from $H$ to $G$ that maps all vertices of each element of $P$ to the image of the corresponding vertex of $H'$.

This bijective correspondence shows that
$$\hom(H,G) = \sum_{H'\in {\cal H}^{p}} \sub(H',G).$$
Equivalently,
\begin{equation}\label{eq-homsub}
\sub(H,G) = \hom(H,G) - \sum_{H'\in {\cal H}^{p}\setminus\{H\}} \sub(H',G).
\end{equation}

We prove Lemma~\ref{lem:hom} by induction.  Assume that the claim is true for all graphs with fewer vertices than $H$.
In particular, for every $H'\in {\cal H}^{p}$ other than $H$, there exist coefficients $\alpha^{H'}_{H''}$ such that
$$\sub(H',G)=\sum_{H''\in {\cal H'}^{p}} \alpha^{H'}_{H''}\hom(H'',G)$$ for every graph $G$.
Note that ${\cal H'}^{p}\subseteq {\cal H}^{p}\setminus\{H\}$.
Therefore, Lemma~\ref{lem:hom} follows from (\ref{eq-homsub}) by setting
$\alpha_H=1$ and $$\alpha_{H''}=-\sum_{H'\in {\cal H}^{p}\setminus\{H\}, H''\in {\cal H'}^{p}} \alpha^{H'}_{H''}$$
for every $H''\in {\cal H}^{p}\setminus\{H\}$.
\end{proof}

A similar trick allows us to deal with disconnected graphs.
\begin{obser}\label{lem:con}
Let $H_1$ and $H_2$ be two graphs.
For the disjoint union $H_1\cup H_2$ it holds that
$$\hom(H_1\cup H_2,G) = \hom(H_1,G)\cdot \hom(H_2,G).$$
\end{obser}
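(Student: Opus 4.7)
The plan is to establish a bijection between $\Hom(H_1\cup H_2, G)$ and $\Hom(H_1,G)\times \Hom(H_2,G)$, after which the multiplicative identity on cardinalities follows immediately. I would define the map in the forward direction by restriction: given a homomorphism $\phi\colon V(H_1\cup H_2)\to V(G)$, send it to the pair $(\phi|_{V(H_1)},\phi|_{V(H_2)})$. Since the vertex sets of $H_1$ and $H_2$ are disjoint, each restriction is a well-defined map, and since every edge of $H_i$ is an edge of $H_1\cup H_2$ (with the same color), each restriction is itself a homomorphism.

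In the reverse direction, given $\phi_1\in\Hom(H_1,G)$ and $\phi_2\in\Hom(H_2,G)$, I would define $\phi$ on $V(H_1)\cup V(H_2)$ by setting $\phi(v)=\phi_i(v)$ for $v\in V(H_i)$. The key point to verify is that $\phi$ is genuinely a homomorphism of $H_1\cup H_2$; this is where the hypothesis of \emph{disjoint} union is essential, because every edge of $H_1\cup H_2$ lies entirely inside one of the two $H_i$, so color-preservation of its image follows directly from the corresponding property of $\phi_i$. No edges cross between the two sides, so no additional compatibility constraint on the pair $(\phi_1,\phi_2)$ arises.

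The two constructions are manifestly mutual inverses, so the restriction map is a bijection and
$$\hom(H_1\cup H_2,G)=|\Hom(H_1,G)|\cdot |\Hom(H_2,G)|=\hom(H_1,G)\cdot\hom(H_2,G).$$
There is no real obstacle here; the only substantive content is the observation that disjointness of the union rules out cross-edges, which is what decouples the counting into an independent pair of choices.
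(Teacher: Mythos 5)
Your proof is correct, and the paper in fact states this as an observation without supplying a proof at all; the restriction/gluing bijection you give is exactly the standard argument that the authors are implicitly appealing to. Nothing is missing.
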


In the following subsection, we design a data structure $\Hom_{H,k}(G)$ for a connected graph $H$
with edges colored by $\{1,\ldots,k\}$, which counts the number $\hom(H,G)$ of homomorphisms from $H$ to $G$, and
allows additions, removals and recolorings of edges in $G$.

The data structure $\Sub_{H,k}(G)$ consists of the collection of data structures $\Hom_{H',k}(G)$
for all connected components of projections of $H$.  Edge additions, removals and recolorings in $G$
are performed in all these structures.  The number $\sub(H,G)$ is determined from the queries
to the structures according to the formula following from Lemmas~\ref{lem:hom} and \ref{lem:con}.

The number of projections of $H$ and their components is bounded by a function of $H$,
which we consider to be a constant.  Therefore, the complexity of operations
with $\Sub_{H,k}(G)$ is the same up to a constant multiplicative factor
as the complexity of operations with $\Hom_{H',k}(G)$ with $|V(H')|\le |V(H)|$.

\subsection{Augmented graphs}\label{sec:aug}

In order to implement the data structure $\Hom_{H,k}(G)$, we use fraternal augmentations.
Essentially, we would like to find a bijection between homomorphisms from $H$ to $G$
and between homomorphisms from all possible $h$-th augmentations of $H$ to an $h$-th
augmentation of $G$, where $h=\binom{|V(H)|}{2}-2$.  However, it turns out that we need
to be a bit more careful.

For a graph $F$ with edges colored by colors $\{1,\ldots, k\}$, we define the color of an edge $(u,v)$
of a $t$-th augmentation of $F$ to be the same as the color of $uv$ if $uv\in E(F)$, and to be $0$ otherwise
(i.e., we introduce a new color for the edges added through the fraternal augmentation).
If $F'$ and $F''$ are directed graphs with edges colored by colors $\{0, 1,\ldots, k\}$,
we say that $F''$ is \emph{obtained from $F'$ by recoloring zeros} if $F'$ and $F''$ differ only in
the colors of edges whose color in $F'$ is $0$.

\begin{lemma}\label{lem:hom-ed}
Let $H$ and $G$ be graphs with edges colored by $\{1,\ldots, k\}$ and let $h\ge 0$ be an integer.
Let $\phi\colon V(H)\to V(G)$ be a homomorphism and let $G'$ be an $h$-th augmentation of $G$.
There exists a graph $H'$ obtained from an $h$-th augmentation of $H$ by recoloring zeros,
such that for every edge $(u,v)\in E(H')$,
\begin{itemize}
\item if $\phi(u)\neq \phi(v)$, then $(\phi(u),\phi(v))\in E(G')$, and $(u,v)$ has the same color as $(\phi(u),\phi(v))$; and,
\item if $\phi(u)=\phi(v)$, then the color of $(u,v)$ is $0$.
\end{itemize}
\end{lemma}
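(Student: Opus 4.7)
The plan is to construct $H'$ inductively, alongside a chosen realization $G=G_0,G_1,\ldots,G_h=G'$ of the $h$-th augmentation of $G$. I will build directed graphs $H_0,H_1,\ldots,H_h$ with edges colored in $\{0,1,\ldots,k\}$ such that each $H_i$ is obtained from an $i$-th augmentation of $H$ by recoloring some of its color-$0$ edges, and the following invariant holds for every edge $(u,v)\in E(H_i)$: if $\phi(u)\neq\phi(v)$ then $(\phi(u),\phi(v))\in E(G_i)$ and the two edges carry the same color, while if $\phi(u)=\phi(v)$ then the color of $(u,v)$ is $0$. Setting $H':=H_h$ then gives the lemma.

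For the base case $i=0$, I orient each undirected edge $uv$ of $H$ as $(u,v)$ iff $(\phi(u),\phi(v))\in E(G_0)$; the homomorphism property makes this well-defined and immediately yields the matching-color clause, while the second clause is vacuous since $\phi(u)\neq\phi(v)$ for every edge $uv\in E(H)$.

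For the inductive step, I pass from $H_{i-1}$ to $H_i$ by keeping all old edges (which automatically retain the invariant, since $G_{i-1}\subseteq G_i$ and old colors are preserved under augmentation) and processing each fork $(u_1,u_2)$ in $H_{i-1}$, witnessed by some common out-neighbor $w$. If $\phi(u_1)=\phi(u_2)$, I add the new edge in either orientation with color $0$, satisfying the second clause. Otherwise I claim that some oriented edge between $\phi(u_1)$ and $\phi(u_2)$ already lies in $G_i$, and I orient the new $H$-edge and recolor it from $0$ to match. To prove the claim I split into subcases. If $\phi(u_1)=\phi(w)$, then applying the invariant to $(u_2,w)\in E(H_{i-1})$ (whose $\phi$-images are distinct because $\phi(u_1)\neq\phi(u_2)$) yields $(\phi(u_2),\phi(u_1))=(\phi(u_2),\phi(w))\in E(G_{i-1})\subseteq E(G_i)$; the subcase $\phi(u_2)=\phi(w)$ is symmetric. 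If $\phi(u_1),\phi(u_2),\phi(w)$ are pairwise distinct, the invariant on $(u_1,w)$ and $(u_2,w)$ gives $(\phi(u_1),\phi(w)),(\phi(u_2),\phi(w))\in E(G_{i-1})$, and either $\phi(u_1),\phi(u_2)$ are already adjacent in $G_{i-1}$ or they form a bona fide fork with witness $\phi(w)$, forcing an edge to be added in the step $G_{i-1}\to G_i$.

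The main obstacle I anticipate is precisely this case analysis, where $\phi$ collapses vertices and the $H$-fork fails to correspond to a $G$-fork; these degenerate cases must be resolved by locating the required edge already inside $G_{i-1}$ rather than among the freshly added ones. A small consistency point, which follows because the orientation and color chosen for the new $H$-edge are dictated by $G_i$ itself and not by the particular witness $w$, handles the situation where $(u_1,u_2)$ is a fork via several distinct witnesses: all of them impose the same choice, so the construction is unambiguous.
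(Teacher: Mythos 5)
Your proposal is correct and follows essentially the same inductive route as the paper's proof: orient $H$ consistently with $\phi$ in the base case, then at each augmentation step orient and recolor the new fork-edges of $H$ to match $G'$, handling the degenerate cases where $\phi$ identifies the witness with an endpoint by locating the required edge already in the previous-level augmentation of $G$. The paper organizes the case split slightly differently (first checking whether $\phi(u),\phi(v)$ are already adjacent in $G_1$, then showing all three images are distinct), but the substance is the same, including your observation that orientation and color of the new $H$-edge are forced by $G'$ and hence independent of the chosen witness.
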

\begin{proof}
We prove the claim by the induction on $h$.  If $h=0$,
we let $H'$ be the orientation of $H$ such that each edge $uv\in E(H)$ is oriented towards $v$ if $(\phi(u),\phi(v))\in G'$ and towards $u$
otherwise (i.e., if $(\phi(v),\phi(u))\in G'$), with the colors of the edges of $H'$ matching the colors of the corresponding edges of $H$.

Therefore, suppose that $h>0$.  Let $G_1$ be an $(h-1)$-th augmentation of $G$ such that $G'$ is a fraternal augmentation of $G_1$.
By induction hypothesis, there exists a directed graph $H_1$ obtained from an $(h-1)$-th augmentation $H$ by recoloring zeros,
satisfying the outcome of the lemma.

Let $u$ and $v$ be vertices forming a fork in $H_1$, such that $\phi(u)\neq \phi(v)$.  If $(\phi(u),\phi(v))\in E(G_1)$ or $(\phi(v),\phi(u))\in E(G_1)$,
then we choose the orientation and the color of the edge $uv$ in $H'$ correspondingly.  Otherwise, consider a vertex $w$ such that $(u,w), (v,w)\in E(H_1)$,
and note that since $\phi(u)$ is not adjacent to $\phi(v)$ in $G_1$, the induction hypothesis implies that $\phi(u)\neq\phi(w)\neq \phi(v)$
and that $(\phi(u),\phi(w)), (\phi(v),\phi(w))\in E(G_1)$.  It follows that $\phi(u)$ and $\phi(v)$ form a fork in $G_1$, and thus
$(\phi(u),\phi(v))\in E(G)$ or $(\phi(v),\phi(u))\in E(G)$.  We choose the orientation of the edge $uv$ in $H'$ correspondingly, and color it by $0$.

Finally, for each pair $u,v\in V(H_1)$ forming a fork in $H_1$ such that $\phi(u)=\phi(v)$, we choose an orientation of $uv$ in $H'$ arbitrarily
and assign it color $0$.  Observe that the fraternal augmentation $H'$ of $H_1$ and its coloring satisfy the outcome of Lemma~\ref{lem:hom-ed} as required.
Furthermore, the choices of colors and orientations of edges of $H'$ that are not mapped to a single vertex are uniquely determined by the conditions of the lemma.
\end{proof}

Lemma~\ref{lem:hom-ed} inspires the following definition.
Let $F'$ be a directed graph with edges colored by $\{0,1,\ldots, k\}$.
Let $P$ be a partition of vertices of $F'$ such that
\begin{itemize}
\item for every $p\in P$, the subgraph of $F'$ induced by $p$ is connected and contains only edges colored by $0$; and
\item if $p_1,p_2\in P$ are distinct, $u,u'\in p_1$, $v,v'\in p_2$ and $(u,v)$ is an edge, then $(v',u')$
is not an edge, and if $(u',v')$ is an edge, then it has the same color as $(u,v)$.
\end{itemize}
Let $F''$ be the directed graph with edges colored by $\{0,1,\ldots, k\}$, such that $V(F'')=P$ and $(p_1,p_2)\in E(F'')$
if and only if $(v_1,v_2)\in E(F')$ for some $v_1\in p_1$ and $v_2\in p_2$; and in this case, $(p_1,p_2)$ and $(v_1,v_2)$ have the
same color.  That is, $F''$ is obtained from $F'$ by identifying the vertices in each part of $P$ and suppressing the parallel edges and loops,
and we also remember which vertices of $F'$ correspond to each vertex of $F''$.
We say that $F''$ is a \emph{$0$-contraction} of $F'$.

We aim to find a bijection between the homomorphisms from an undirected graph $H$ to an undirected graph $G$
and the homomorphisms from all possible $0$-contractions of augmentations of $H$ to a fixed augmentation of $G$.
We will need the following uniqueness result.

\begin{lemma}\label{lem:unique}
Let $H$ be a graph with edges colored by $\{1,\ldots, k\}$ and let $h=\binom{|V(H)|}{2}-2$.
Let $G'$ be a directed graph with edges colored by $\{0,1,\ldots, k\}$, such that
each vertex of $G'$ is contained in a loop with color $0$, but $G'$ has no other loops or parallel edges.
Let $H_1$ and $H_2$ be graphs obtained from $h$-th augmentations of $H$ by recoloring zeros, such that
there exists $\phi\colon V(H)\to V(G')$ which is a homomorphism both from $H_1$ and from $H_2$ to $G'$.
Let $P_0$ be the partition of $V(H)$ such that two vertices $u,v\in V(H)$ belong to the same part in $P$
if and only if $\phi(u)=\phi(v)$.
For $i\in \{1,2\}$, let $P_i$ be the partition of $V(H)$ such that each $p\in P_i$ is the vertex set of a connected
component of the subgraph of $H_i$ induced by vertices in some part $p'\in P_0$.  Let $H'_i$ be the $0$-contraction
of $H_i$ corresponding to $P_i$.  Then $H'_1=H'_2$.
\end{lemma}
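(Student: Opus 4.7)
The goal is to prove by induction on the augmentation stage $j$ that the $0$-contraction of $H_i^{(j)}$ with respect to the partition $P_i^{(j)}$ (defined as $P_i$ but using $H_i^{(j)}$ in place of $H_i$) is the same for $i=1$ and $i=2$; applying this at $j=h$ will give the lemma.

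\emph{Admissibility of $P_i^{(j)}$.} First I would check that $P_i^{(j)}$ satisfies the two conditions in the definition of $0$-contraction. Any edge of $H_i$ whose endpoints lie in the same part of $P_0$ is mapped by $\phi$ to an edge of $G'$ whose endpoints coincide, and since the only such edges of $G'$ are the color-$0$ loops, every within-$P_0$-part edge of $H_i$ has color $0$. Between-part edges of $H_i^{(j)}$ pull back to a unique edge of $G'$ (no parallel edges) with a uniquely determined color and direction, so the consistency condition for $0$-contraction holds.

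\emph{Base case $j=0$.} The orientation $H_i^{(0)}$ of $H$ is entirely forced by $\phi$: every edge of $H$ has color in $\{1,\dots,k\}$, hence its endpoints have distinct $\phi$-values, and the unique edge of $G'$ between those $\phi$-values fixes the orientation. Thus $H_1^{(0)}=H_2^{(0)}$, $P_i^{(0)}$ is the partition into singletons (since $P_0$-parts are independent in $H$), and the $0$-contractions agree trivially.

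\emph{Inductive step.} Assuming the $0$-contractions agree at stage $j$, I would analyze how augmentation affects them case by case. For a within-$P_0$-part fork $(u,v)$ in $H_i^{(j)}$ with common out-neighbor $w$: if $\phi(w)=\phi(u)$, the within-part edges $uw,vw$ already place $u,v,w$ in the same $P^{(j)}$-part, and the new edge contributes nothing at the contracted level; if $\phi(w)\neq\phi(u)$, the edges $uw,vw$ become contracted edges $C_u\to C_w$ and $C_v\to C_w$ of $H^{\prime(j)}_i$, which by the inductive hypothesis are present in both $H^{\prime(j)}_1$ and $H^{\prime(j)}_2$, and so the eventual merging of $C_u$ and $C_v$ (if distinct) happens identically at the contracted level in both. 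For a between-$P_0$-part fork, the added edge has direction and color uniquely determined by $G'$, and its presence at the contracted level depends only on the contracted edges of $H^{\prime(j)}_i$. In every case the update to the $0$-contraction is determined solely by $H^{\prime(j)}_i$, so $H^{\prime(j+1)}_1=H^{\prime(j+1)}_2$.

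\emph{Main obstacle.} The subtlety I expect to dominate the proof is that $H_1^{(j)}$ and $H_2^{(j)}$ can genuinely differ as directed colored graphs: arbitrary direction choices for within-$P_0$-part fork edges in early stages cascade into different forks at later stages and thus different sets of vertex-level edges. The key to the inductive step is therefore showing that any ``extra'' vertex-level edge in one $H_i^{(j)}$ either lies within a single $P^{(j)}$-part (and so is invisible to the $0$-contraction) or is parallel, at the contracted level, to an edge that is present in both; this requires a careful reading of which forks can occur given a shared contracted structure, together with the observation that within-part fork edges with common out-neighbor in the same $P_0$-part are always redundant for connectivity. Once this key technical claim is in place, the induction yields $H'_1 = H'_2$ at $j=h$.
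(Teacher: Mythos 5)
Your high-level plan — induct on the augmentation stage and show the $0$-contractions agree at every stage — is the right shape of argument, and your base case is correct. But the proposal does not actually establish the inductive step; it names it as the ``main obstacle'' and then only gestures at why it might hold. That gap is exactly where all the content of the lemma lives, and as stated your proposed invariant is not obviously maintainable.

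The difficulty is that the fraternal augmentation is applied to the vertex-level graph, not to the $0$-contraction. Two augmentation sequences $H_1^{(0)},\dots,H_1^{(h)}$ and $H_2^{(0)},\dots,H_2^{(h)}$ can disagree inside a $P_0$-part in a way that changes \emph{when} a particular cross-part edge or a merger of $P^{(j)}$-parts appears. Concretely: suppose at stage $j$ the contracted graph has edges $p_u\to p_w$ and $p_v\to p_w$. That does not guarantee that $H_2^{(j)}$ contains a single vertex of $p_w$ that is a common out-neighbor of some vertex of $p_u$ and some vertex of $p_v$, so the fork that adds the $p_u$-to-$p_v$ edge in $H_1^{(j+1)}$ need not exist in $H_2^{(j+1)}$. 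Your claim that ``the update to the $0$-contraction is determined solely by $H^{\prime(j)}_i$'' is precisely what fails, and your ``observation that within-part fork edges with common out-neighbor in the same $P_0$-part are always redundant for connectivity'' is also asserted without justification and does not address the cross-part case just described. Finally, you never invoke the hypothesis that $H_1$ and $H_2$ are elder graphs (after $h$ augmentations), and this hypothesis is essential: it is what lets you repair a ``missing'' fork.

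The paper's proof avoids your invariant by not comparing canonical stages at all. Instead, it fixes a sequence $H^0_1,\dots,H^k_1$ that builds $H_1$ \emph{one edge at a time}, and then inductively constructs a tailored nondecreasing sequence $H^0_2\subseteq\dots\subseteq H^k_2\subseteq H_2$ (not the canonical augmentation stages of $H_2$) whose $0$-contraction under $P(\cdot)$ tracks that of $H^i_1$ exactly. When a fork adds an edge $(u,v)$ in $H^i_1$ between distinct parts $p_u,p_v$ via a vertex $w$ in a third part $p_w$, the construction takes an induced path inside $p_w$ in $H^{i-1}_2$ connecting witnesses $w_1,w_2$, and repeatedly uses the elderness of $H_2$ to short-cut this path, eventually producing an edge $(u',v')$ in $H_2$ between $p_u$ and $p_v$; the added edges stay within or adjacent to $p_w$, so the partition and $0$-contraction update identically. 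This path-shortcutting via elderness of $H_2$ is the key technical device your proposal is missing; without it, the ``key technical claim'' you flagged remains unproved, and the induction does not close.
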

\begin{proof}
Before proceeding with the proof, let us remark that the assumption that $\phi$ is a homomorphism
from $H_i$ to $G'$ ensures that the conditions on the partition $P_i$ from the definition of a $0$-contraction
are satisfied.  Furthermore, $H'_1=H'_2$ implies $P_1=P_2$.

Suppose that $F$ is a directed graph with vertex set $V(H)$ and with edges colored by $\{0,1,\ldots, k\}$ such that
$\phi$ is a homomorphism from $F$ to $G'$.  Let $P(F)$ be the partition of $V(H)$
such that each $p\in P(F)$ is the vertex set of a connected
component of the subgraph of $F$ induced by vertices in some part $p'\in P_0$.

Note that both $H_1$ and $H_2$ are elder graphs.
Let $H^0_1$ be an orientation of $H$ and let $H^0_1$, $H^1_1$, \ldots, $H^k_1$ be a sequence of directed graphs
with edges colored by $\{0,1,\ldots, k\}$, such that $H_1=H^k_1$ and for $1\le i\le k$, the graph
$H^i_1$ is obtained from $H^{i-1}_1$ by adding an edge joining two vertices forming a fork.

We are going to construct a sequence $H^0_2$, $H^1_2$, \ldots, $H^k_2$, where
\begin{itemize}
\item $H^0_2$ is an orientation of $H$ and $H^i_2\subseteq H_2$ for $0\le i\le k$,
\item $H^i_2$ is obtained from $H^{i-1}_2$ by repeatedly adding edges
joining two vertices forming a fork, for $1\le i\le k$, and
\item $P(H^i_1)=P(H^i_2)$ and the $0$-contractions of $H^i_1$ and $H^i_2$ corresponding to this
partition are identical, for $0\le i\le k$.
\end{itemize}

We set $H^0_2=H^0_1$.  Since $\phi$ is a homomorphism from both $H_1$ and $H_2$ to $G'$ and no edge
of $H$ is colored by $0$, we have $H^0_2\subseteq H_2$, hence $H^0_2$ satisfies the required properties.

Suppose now that $1\le i\le k$ and that we have already constructed $H^{i-1}_2$.
Let $u,v, w\in V(H)$ be the vertices such that $u$ and $v$ are not adjacent in $H^{i-1}_1$,
$(u,w), (v,w)\in E(H^{i-1}_1)$ and $(u,v)\in E(H^i_1)$.  Let $P_{i-1}=P(H^{i-1}_1)=P(H^{i-1}_2)$.
If $u$ and $v$ belong to the same part of $P_{i-1}$, then note that $P(H^i_1)=P_{i-1}$
and the $0$-contractions of $H^{i-1}_1$ and $H^i_1$ corresponding to this partition are identical.
Therefore, we can set $H^i_2=H^{i-1}_2$.

Suppose now that $u$ and $v$ belong to different parts $p_u,p_v\in P_{i-1}$.
Let $p_w\in P_{i-1}$ be the part containing $w$.  Note that $p_w\neq p_u$, as otherwise
$H_1$ would contain both edges $(u,v)$ and $(v,w)$ with $\phi(u)=\phi(w)\neq \phi(v)$, contrary to the
assumption that $\phi$ is a homomorphism from $H_1$ to $G'$.  If $p_w=p_v$, then note that
$P(H^i_1)=P_{i-1}$ and the corresponding $0$-contractions of $H^{i-1}_1$ and $H^i_1$ are identical,
hence we can set $H^i_2=H^{i-1}_2$.

Therefore, we can assume that $p_u\neq p_w\neq p_v$.  In this case we construct
$H^i_2$ by initially setting $H^i_2\colonequals H^{i-1}_2$ and then adding edges as described in the
rest of this paragraph.
Since there exist edges between $w$ and $u$ and $v$,
we also have $\phi(u)\neq \phi(w)\neq \phi(v)$.  Since the $0$-contractions of $H^{i-1}_1$ and $H^{i-1}_2$
corresponding to $P_{i-1}$ are identical, there exist vertices $u'\in p_u$, $v'\in p_v$ and $w_1,w_2\in p_w$
with $(u',w_1),(v',w_2)\in E(H^{i-1}_2)$.  Let $w_1=x_1$, $x_2$, \ldots, $x_t=w_2$ be an induced
path between $w_1$ and $w_2$ in the underlying undirected graph of the subgraph of $H^{i-1}_2$ induced
by $p_w$.  Let us also set $x_0=u'$ and $x_{t+1}=v'$.  Since $(x_0,x_1),(x_{t+1},x_t)\in E(H^{i-1}_2)$,
observe that there exists $j$ (with $1\le j\le t$)
such that $(x_{j-1},x_j),(x_{j+1},x_j)\in E(H^{i-1}_2)$.  Since $H_2$ is an elder graph, we have
that either $(x_{j-1},x_{j+1})$ or $(x_{j+1},x_{j-1})$ is an edge of $H_2$.  We add this edge to $H^i_2$
and consider the path $x_0$, $x_1$, \ldots, $x_{j-1}$, $x_{j+1}$, \ldots, $x_{t+1}$.  Let us note
that if $j=1$ and $t>1$, then we added the edge $(x_0,x_2)$, as we have $x_0\in p_u$, $x_2\in p_w$,
there already exists an edge from $p_u$ to $p_w$, and $\phi$ is a homomorphism from $H_2$ to $G$
that maps all vertices of $p_u$ to $\phi(u)$ and all vertices of $p_w$ to $\phi(w)$.  A symmetric argument holds in the case that
$j=t>1$. Therefore, we can repeat this procedure until an edge between $u'$ and $v'$ is added.

If $\phi(u)\neq\phi(v)$, then the last added edge is $(u',v')$, since $\phi$ is a homomorphism
from both $H_1$ and $H_2$ to $G$.  Furthermore, all other added edges were inside $p_w$, or between $p_u$ and $p_w$,
or between $p_v$ and $p_w$, hence $P(H^i_2)=P_{i-1}=P(H^i_1)$.  We conclude that the corresponding $0$-contractions
are identical as required.
If $\phi(u)=\phi(v)$, then we similarly conclude that both $P(H^i_1)$ and $P(H^i_2)$ are obtained from $P_{i-1}$
by merging $p_u$ and $p_v$, and that the corresponding $0$-contractions of $H^i_1$ and $H^i_2$ are identical.

Therefore, there exists the sequence $H^0_2$, \ldots, $H^k_2$ with the required properties.
Since $H^k_2\subseteq H_2$, we have that $P(H_1)=P(H^k_2)$ is a refinement of $P(H_2)$.
By switching the role of $H_1$ and $H_2$ in the argument, we conclude that $P(H_2)$ is a refinement of $P(H_1)$.
Therefore, $P(H_1)=P(H_2)$.  Since $H_1=H^k_1$ and $H^k_2$ have the same $0$-contraction corresponding to this
partition, it follows that $H'_1\subseteq H'_2$.  By symmetry, we have $H'_2\subseteq H'_1$, and thus $H'_1=H'_2$.
\end{proof}

Lemmas~\ref{lem:hom-ed} and \ref{lem:unique} enable us to express the number of homomorphisms for undirected
graphs in the terms of the homomorphisms of their augmentations.
For a graph $H$ with edges colored by $\{1,\ldots, k\}$, let ${\cal H}^e$ denote 
the set of all $0$-contractions of graphs obtained by recoloring zeros from $h$-th augmentations of $H$,
where $h=\binom{|V(H)|}{2}-2$.

\begin{lemma}\label{lem:hom-corr}
Let $H$ and $G$ be graphs with edges colored by $\{1,\ldots, k\}$ and let $h=\binom{|V(H)|}{2}-2$.
If $G'$ is an $h$-th augmentation of $G$, then
$$\hom(H,G)=\sum_{H'\in {\cal H}^e} \hom(H',G').$$
\end{lemma}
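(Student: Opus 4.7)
The plan is to prove the identity by exhibiting a bijection between homomorphisms $\phi\colon V(H)\to V(G)$ and pairs $(H',\tilde\phi)$ with $H'\in \He$ and $\tilde\phi\colon V(H')\to V(G')$ a homomorphism. Let $G^*$ denote the graph obtained from $G'$ by adding a color-$0$ loop at every vertex; with this modification, the conclusion of Lemma~\ref{lem:hom-ed} is equivalent to saying that $\phi$ is a genuine homomorphism from its output graph to $G^*$, which is exactly the setting in which Lemma~\ref{lem:unique} applies.

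For the forward direction, given $\phi$, Lemma~\ref{lem:hom-ed} yields a graph $H_1$ obtained from an $h$-th augmentation of $H$ by recoloring zeros. Let $P$ be the partition of $V(H)$ whose parts are the connected components of the subgraph of $H_1$ induced on each fiber $\phi^{-1}(x)$; by Lemma~\ref{lem:hom-ed} every edge of $H_1$ inside a fiber has color $0$, so $P$ satisfies the two conditions in the definition of a $0$-contraction. Set $H'\colonequals H_1/P\in \He$, and define $\tilde\phi(p)\colonequals \phi(v)$ for any $v\in p$. Then $\tilde\phi$ is a homomorphism from $H'$ to $G'$: each edge of $H'$ comes from an edge $(u,v)$ of $H_1$ with $u,v$ in distinct parts of $P$, so maximality of $P$ forces $\phi(u)\neq \phi(v)$, and Lemma~\ref{lem:hom-ed} guarantees that $(\phi(u),\phi(v))$ is an edge of $G'$ of the required color. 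Lemma~\ref{lem:unique}, applied with $G^*$ in place of its $G'$, then ensures that neither $H'$ nor $\tilde\phi$ depends on the choice of $H_1$.

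For the backward direction, each $H'\in \He$ carries an attached partition $P$ of $V(H)$ with $H'=H_1/P$ for some admissible $H_1$; given $\tilde\phi\colon V(H')\to V(G')$, set $\phi(v)\colonequals \tilde\phi(p)$ for the unique $p\in P$ containing $v$. Any edge $uv\in E(H)$ is oriented in $H_1$ with a nonzero color, so $u$ and $v$ belong to distinct parts of $P$ (parts contain only color-$0$ edges), and the resulting edge of $H'$ has that same nonzero color. Since the nonzero-colored edges of $G'$ are precisely the oriented edges of $G$ with their original colors, this shows $\phi(u)\phi(v)\in E(G)$ of the correct color, and hence $\phi\in \Hom(H,G)$.

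The two maps are mutually inverse: forward-then-backward is immediate from $\phi=\tilde\phi\circ \pi_P$. For backward-then-forward, the crucial point is that for any pair $(H',\tilde\phi)$ contributing to the sum, the attached partition $P$ must coincide with the maximal partition used in the forward construction. Indeed, if some $P$-part were a proper subset of a connected component of $H_1$ restricted to a fiber $\phi^{-1}(x)$, that component would contain a color-$0$ edge of $H_1$ between two distinct $P$-parts; this edge would project to a color-$0$ edge in $H'$ whose endpoints are both mapped by $\tilde\phi$ to $x$, contradicting that $\tilde\phi$ is a homomorphism into the loopless graph $G'$. With this maximality pinned down, Lemma~\ref{lem:unique} recovers the contracted graph uniquely. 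The main obstacle is thus bridging the loopless $G'$ in the statement with the loop-augmented $G^*$ required by Lemma~\ref{lem:unique}: the no-loop argument just given is precisely what forces non-maximal $0$-contractions in $\He$ to contribute zero, so that the bijection witnesses the claimed equality.
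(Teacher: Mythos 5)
Your proof is correct and follows essentially the same route as the paper: both define the bijection $\Phi(\phi)=(H',\phi')$ via Lemma~\ref{lem:hom-ed}, contract along connected components of $\phi$-fibers, and apply Lemma~\ref{lem:unique} (to the $0$-loop augmentation of $G'$) for well-definedness and surjectivity. Your explicit observation that for any contributing pair $(H',\tilde\phi)$ the attached partition must be the maximal fiber-component one---since otherwise a color-$0$ edge of $H'$ would be forced onto a non-existent loop of $G'$---spells out a step the paper uses implicitly when it invokes Lemma~\ref{lem:unique} to conclude $H''=H'$.
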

\begin{proof}
Consider a homomorphism $\phi$ from $H$ to $G$.  Let $H_0$ be a graph obtained from an $h$-th augmentation of $H$ by recoloring zeros
such that $H_0$ satisfies the outcome of Lemma~\ref{lem:hom-ed}.  Let $P_0$ be the partition of $V(H_0)$ such that
two vertices $u,v\in V(H_0)$ are in the same part if and only if $\phi(u)=\phi(v)$.  Note that every $p\in P_0$ induces
a subgraph of $H_0$ whose edges have color $0$.  Let $P$ be the refinement of $P_0$ such that every $p'\in P$ is the vertex
set of a connected component of the underlying undirected graph of the subgraph induced in $H_0$ by some $p\in P_0$.
Let $H'$ be the $0$-contraction corresponding to the partition $P$ (which satisfies the assumptions from the definition of a $0$-contraction
since $\phi$ is a homomorphism).  Let $\phi'\colon V(H')\to V(G)$ be the mapping such that $\phi'(p)=\phi(v)$ for every $p\in V(H')$ and $v\in p$.
Observe that $\phi'$ is a homomorphism from $H'$ to $G'$.  This defines a mapping $\Phi(\phi)=(H',\phi')$ which 
assigns a graph $H'\in {\cal H}^e$ and a homomorphism $\phi'\colon V(H')\to V(G')$ to each homomorphism $\phi\colon V(H)\to V(G)$.

We need to prove that $\Phi$ is a bijection.  Note that if $\Phi(\phi)=(H',\phi')$, then for each $v\in V(H)$, we have $\phi(v)=\phi'(p)$,
where $p$ is the vertex of $H'$ such that $v\in p$.  Therefore, $\Phi$ is an injection, and it suffices to argue that $\Phi$ is surjective.

Consider arbitrary $H'\in{\cal H}^e$ and a homomorphism $\phi'$ from $H'$ to $G'$.  Since $H'\in{\cal H}^e$, there exists a graph $H'_0$ obtained
from an $h$-th augmentation of $H$ by recoloring zeros such that $H'$ is a $0$-contraction of $H'_0$.
Let $\phi\colon V(H)\to V(G)$ be the mapping defined by $\phi(v)=\phi'(p)$, where $p$ is the vertex of $H'$ such that $v\in p$.  Note that if $uv\in E(H)$,
then $(u,v)\in E(H'_0)$ or $(v,u)\in E(H'_0)$ and this edge has nonzero color, and thus there exist distinct vertices $p_u,p_v\in V(H')$ with
$u\in p_u$ and $v\in p_v$ such that an orientation of $p_up_v$ is an edge of $H'$ of the same color.  Since $\phi'$ is a homomorphism,
we conclude that $\phi'(p_u)\phi'(p_v)=\phi(u)\phi(v)$ is an edge of $G$ of the same color.  It follows that $\phi$ is a homomorphism
from $H$ to $G$.

We need to prove that $\Phi(\phi)=(H',\phi')$.  Suppose that $\Phi(\phi)=(H'',\phi'')$.
Let $H_0$ be the graph from the definition of $\Phi(\phi)$.  Since $\phi$ is a homomorphism from both $H_0$ and $H'_0$ to
the graph obtained from $G'$ by adding loops of color $0$ to each vertex,
Lemma~\ref{lem:unique} implies that $H''=H'$.  Since the homomorphism from $H'$ to $G'$ is uniquely determined by $\phi$,
it also follows that $\phi''=\phi'$, as required.
Therefore, $\Phi$ is indeed a bijection, and the equality of the lemma follows.
\end{proof}

Furthermore, taking $0$-contractions preserves elderness.

\begin{lemma}
If $H$ is an elder graph with edges colored by $\{0, 1,\ldots, k\}$
and $H'$ is a $0$-contraction of $H$, then $H'$ is an elder graph.
\end{lemma}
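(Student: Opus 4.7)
I plan to argue by contradiction: suppose $H'$ has a fork, that is, distinct non-adjacent vertices $p_1,p_2\in V(H')$ sharing a common out-neighbor $p_3\in V(H')$. The strategy is to lift this fork back to $H$, where it must contradict the assumption that $H$ is elder.

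The first step is to extract structural constraints from the two defining conditions of a $0$-contraction. Non-adjacency of $p_1$ and $p_2$ in $H'$ means there is no edge of $H$ between any vertex of $p_1$ and any vertex of $p_2$, in either direction. The edge $(p_1,p_3)\in E(H')$, together with the antiparallel condition in the definition of $0$-contraction, forbids any edge $(y,u)\in E(H)$ with $y\in p_3$ and $u\in p_1$; symmetrically, there is no edge $(y,v)\in E(H)$ with $y\in p_3$ and $v\in p_2$.

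Next I would choose a minimum-length witness: vertices $u\in p_1$, $v\in p_2$ and a path $w_u=y_0,y_1,\ldots,y_s=w_v$ inside the subgraph of $H$ induced by $p_3$ (all of whose edges are colored $0$) such that $(u,y_0),(v,y_s)\in E(H)$, with $s$ as small as possible; such a path exists because $p_3$ induces a connected subgraph. Reading the directed edges along $u,y_0,\ldots,y_s,v$, the first edge points into $y_0$ and the last edge points into $y_s$, so there must be some index $j\in\{0,\ldots,s\}$ at which both edges incident to $y_j$ in the sequence point into $y_j$; call such a $y_j$ a \emph{sink}.

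The final step is to derive a contradiction from each possible location of the sink. If $1\le j\le s-1$, then $y_{j-1}\ne y_{j+1}$ are distinct common in-neighbors of $y_j$, so elderness of $H$ forces an edge between them; this edge lies inside $p_3$ and hence has color $0$, so shortcutting the path contradicts minimality of $s$. If $j=0$, then $u\ne y_1$ are common in-neighbors of $y_0$, and elderness forces an edge between them; the forbidden-orientation observation from step one rules out $(y_1,u)$, leaving $(u,y_1)\in E(H)$, whence $y_1,\ldots,y_s$ is a strictly shorter witness. The case $j=s$ is symmetric. Finally, if $s=0$, then $u$ and $v$ are distinct common in-neighbors of a single vertex, so elderness supplies an edge between them, contradicting the non-adjacency of $p_1$ and $p_2$ in $H'$. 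The main subtlety will be the endpoint-sink cases, where the antiparallel condition of the $0$-contraction definition is essential for pinning down the orientation of the newly forced edge and thus enabling the shortening.
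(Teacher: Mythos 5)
Your proposal is correct and follows essentially the same approach as the paper: both arguments fix a minimum-length path inside the target part $p_3$ connecting an in-neighbor of some $u\in p_1$ to an in-neighbor of some $v\in p_2$, observe that elderness forbids an interior sink along that path (or, in the paper's phrasing, forces a monotone orientation of the interior edges once the first one is fixed), reduce to a sink at an endpoint, invoke the antiparallel clause of the $0$-contraction definition to pin the orientation of the edge between $u$ (resp.\ $v$) and the next path vertex, and then shorten the path to contradict minimality; the degenerate case of a length-$0$ path is dispatched separately by elderness plus non-adjacency of $p_1$ and $p_2$ in $H'$, exactly as in the paper.
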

\begin{proof}
Let $P$ be a partition of $V(H)$ that gives rise to $H'$.  Suppose that vertices $u',v'\in V(H')$ form a fork, i.e., they are non-adjacent
and there exists a vertex $w'\in V(H')$ with $(u',w'),(v',w')\in E(H')$.

By the definition of a $0$-contraction,
there exist vertices $u\in u'$, $v\in v'$ and $w_1,w_2\in w'$ such that
$(u,w_1),(v,w_2)\in E(H)$.  Furthermore, the underlying undirected graph of the subgraph of $H$ induced
by $w'$ is connected, hence it contains a path $Q=x_1x_2\ldots x_t$ with $x_1=w_1$ and $x_t=w_2$.
Let us choose the vertices $w_1$, $w_2$ and the path $Q$ so that the length of $Q$ is minimal.

Since $u'$ and $v'$ are not adjacent, it follows that $u$ and $v$ are not adjacent, and since $H$ is an
elder graph, we conclude that $w_1\neq w_2$.  Note that $Q$ is an induced path.  Since $H$ is an elder graph,
it follows that for every $i$ with $2\le i\le t-1$, either $(x_{i-1},x_i)\not\in E(H)$ or $(x_{i+1},x_i)\not\in E(H)$.
This implies that if $(x_1,x_2)\in E(H)$, then $(x_i,x_{i+1})\in E(H)$ for $1\le i\le t-1$.  Therefore,
either $(x_2,w_1)\in E(H)$ or $(x_{t-1},w_2)\in E(H)$.  By symmetry, we assume the former.
Since $H$ is an elder graph, this implies that either $(u,x_2)\in E(H)$ or $(x_2, u)\in E(H)$.
Since $H'$ is a $0$-contraction of $H$ arising from the partition $P$, $(u,w_1)\in E(H)$ and
$w_1,x_2\in P_w$, it follows that the edge between $u$ and $x_2$ cannot be oriented towards $u$,
and thus $(u,x_2)\in E(H)$.  However, the path between $x_2$ and $w_2$ is shorter than $Q$.
Since $Q$ was chosen so that its length is minimal, this is a contradiction.
\end{proof}

In the following subsection, we design a data structure $\AHom_{(H',T'),k,D}(G')$ for an elder vineyard $(H',T')$
and a directed graph $G'$ of maximum in-degree at most $D$, where both $H'$ and $G'$ have edges colored by $\{0,1,\ldots, k\}$.
The data structure $\AHom$
counts the number $\hom(H',G')$ of homomorphisms from $H'$ to $G'$ and
allows additions, removals, reorientations and recolorings of edges in $G'$.

The data structure $\Hom_{H,k}(G)$ consists of
\begin{itemize}
\item an $h$-th augmentation $G'$ of $G$ (where $h=\binom{|V(H)|}{2}-2$) maintained as described in Theorem~\ref{thm:aug}, and
\item the collection of data structures $\AHom_{(H',T'),k,D}(G')$ for each $H'\in{\cal H}^e$,
where $D$ is the bound from Theorem~\ref{thm:aug} for the class of graphs containing $G$
and $T'$ is an outbranching in $H'$ such that $(H',T')$ is an elder vineyard (which exists by Lemma~\ref{lem:vine}).
\end{itemize}
Note that $|{\cal H}^e|$ is bounded by a function of $H$ and $k$ only, and thus its size is constant.

Edge additions and removals are first performed in the data structure representing the $h$-th augmentation $G'$ of $G$.
Each addition results in $\Oh(D\log^{h+1} |V(G)|)$ changes in $G'$, each removal results in $\Oh(D)$ such changes (amortized).
A recoloring in $G$ only affects one edge of $G'$.  In all the cases, the changes of $G'$ are performed in
all the $\AHom$ substructures.  The number $\hom(H,G)$ is determined by summing the results
of the queries to these substructures, as follows from Lemma~\ref{lem:hom-corr}.

\subsection{Homomorphisms of elder graphs}
In this subsection, we describe the data structure $\AHom$, thus finishing the design of the data
structure for subgraphs.

Let $(H,T)$ be an elder vineyard.
A \emph{clan} is a subset $C$ of vertices of $H$ such that $N^+_1(C)=C$
and the subgraph $T'$ of $T$ induced by $C$ is an outbranching. Let $r(C)$
denote the root of this outbranching $T'$.
The \emph{ghosts} of a clan $C$ are the vertices $N^-_1(C) \setminus C$.
\begin{lemma}\label{lem:clans}
Let $(H,T)$ be an elder vineyard.
\begin{enumerate}
\item For every $v\in V(H)$, the set $N^+_{\infty}(v)$ is a clan.
\item The ghosts of a clan $C$ are exactly the vertices in $N^-_1(r(C)) \setminus C$.
\item All ghosts of a clan $C$ are on the path from $r(H)$ to $r(C)$ in $T$.
\end{enumerate}
\end{lemma}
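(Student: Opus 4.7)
The plan is to handle the three items separately, each time exploiting the vineyard property (every edge of $H$ gives a $T$-comparable pair of endpoints) together with eldership (no forks).

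For (1), set $C=N^+_\infty(v)$; the equality $N^+_1(C)=C$ is immediate from the transitive-closure definition. The crux is to show that the subgraph $T[C]$ of $T$ induced by $C$ is an outbranching. Since any connected subgraph of an outbranching is itself an outbranching (rooted at its topmost vertex in the ambient tree), it suffices to prove the underlying undirected graph of $T[C]$ is connected. I argue this by induction on the length $k$ of a directed path $v=w_0,w_1,\ldots,w_k=u$ in $H$ witnessing $u\in C$. The inductive hypothesis gives that $w_{k-1}$ is $T[C]$-connected to $v$; then the vineyard applied to $(w_{k-1},u)\in E(H)$ supplies a directed $T$-path $Q$ either from $w_{k-1}$ to $u$ or from $u$ to $w_{k-1}$. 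In either orientation, each intermediate vertex $z$ of $Q$ is reachable from $v$ by following the inductive $H$-path into $w_{k-1}$, possibly appending the edge $(w_{k-1},u)$, and then riding $Q$'s $T$-edges; since $T$-edges are $H$-edges, this places $z\in C$, so $Q\subseteq T[C]$ connects $u$ to $w_{k-1}$.

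For (2), pick a ghost $x\in N^-_1(C)\setminus C$ and some $y\in C$ with $(x,y)\in E(H)$. I walk up the outbranching $T[C]$ from $y$ toward $r(C)$: if $y\neq r(C)$, its $T[C]$-parent $y'$ lies in $C$ and contributes the edge $(y',y)\in E(H)$. Together with $(x,y)$, the absence of a fork on $\{x,y'\}$ gives $x=y'$ or $x,y'$ adjacent in $H$. Since $y'\in C$ while $x\notin C$, the case $x=y'$ is impossible; and if $(y',x)\in E(H)$, then $x\in N^+_1(C)=C$, also impossible. Hence $(x,y')\in E(H)$, and iterating along the $T[C]$-ancestors of $y$ produces $(x,r(C))\in E(H)$, so $x\in N^-_1(r(C))\setminus C$. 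The reverse inclusion is trivial.

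For (3), apply the vineyard to the edge $(x,r(C))\in E(H)$ supplied by (2): $x$ and $r(C)$ are $T$-comparable. If $r(C)$ were a proper $T$-ancestor of $x$, then the $T$-parent-chain from $x$ would pass through $r(C)$. But at each step, the current vertex lying outside $C$ forces its $T$-parent to also lie outside $C$, for otherwise the $T$-edge from the parent places the current vertex in $N^+_1(C)=C$. Iterating yields that every $T$-ancestor of $x$ is outside $C$, contradicting $r(C)\in C$. Hence $x$ is a $T$-ancestor of $r(C)$, and therefore lies on the $T$-path from the root of $T$ to $r(C)$. The main obstacle is the connectedness step in (1): the vineyard only guarantees a $T$-path between the endpoints of each $H$-edge, but we need the entire such path to lie in $C$; the key observation making this work is that $T$-edges are also $H$-edges, so the inductive $H$-path to $w_{k-1}$ can be prolonged along $Q$'s $T$-edges (in the appropriate direction, possibly after traversing $(w_{k-1},u)$) to witness reachability from $v$ of every internal vertex of $Q$.
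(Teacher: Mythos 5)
Your proof is correct. Items (2) and (3) follow the same route as the paper: for (2) you walk a neighbor of a ghost up the tree toward $r(C)$, using eldership at each step to show the ghost is adjacent to the parent, and correctly rule out the edge being oriented the wrong way (which would put the ghost inside $C$); for (3) you apply the vineyard property to the edge $(x,r(C))$ and then argue that $x$ cannot be a descendant of $r(C)$ because a descendant of $r(C)$ reachable by $T$-edges would lie in $N^+_\infty(r(C))\subseteq C$ --- the paper states this as ``follows from the definition of vineyard,'' leaving that last step implicit, so you are simply being more explicit. Where you genuinely diverge is in (1): the paper argues by contradiction, observing that if $T[C]$ were disconnected then two components would be joined by an $H[C]$-edge, yet no directed $T$-path can connect them (since any such path would lie entirely in $C$), violating the vineyard property; you instead give a direct induction on the length of a witnessing directed $H$-path from $v$ to $u\in C$, showing the $T$-path provided by the vineyard axiom for the last edge has all its internal vertices in $C$. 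Both arguments rely on the same key observation --- $T$-edges are $H$-edges, so $T$-paths starting in $C$ stay in $C$ --- and on the (unstated in the paper, stated by you) fact that a connected induced subgraph of an outbranching, with inherited orientation, is again an outbranching. Your constructive induction is a bit longer but arguably cleaner to verify than the contradiction; otherwise the proofs are equivalent in substance.
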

\begin{proof}
Let us prove the claims separately:
\begin{enumerate}
\item Let $C=N^+_{\infty}(v)$. Clearly, $N^+_1(C)=C$.  Note that the subgraph $H[C]$ of $H$ induced by $C$ is connected.
If the subgraph of $T$ induced by $C$ is not an outbranching, then it contains two components
$T_1$ and $T_2$ joined by an edge of $H[C]$.  Observe that no directed path in $T$ contains
a vertex both in $T_1$ and $T_2$.  This contradicts the assumption that $(H,T)$ is a
vineyard.
\item Suppose that $v$ is a ghost of $C$, i.e., there exists an edge $(v,w)\in E(H)$
for some $w\in C$.  Let $w$ be such a vertex whose distance from $r(C)$ in $T$ is minimal.
If $w\neq r(C)$, then consider the in-neighbor $z$ of $w$ in $T$.
Since $H$ is an elder graph, $v$ and $z$ are adjacent in $H$.  Since $v$ does not belong
to $C$, we have $(z,v)\not\in E(H)$, and thus $(v,z)\in E(H)$.  However, the distance
from $r(C)$ to $z$ in $T$ is smaller than the distance to $w$, which is a contradiction.
Therefore, we have $w=r(C)$ as required.
\item This follows from the definition of vineyard.
\end{enumerate}
\end{proof}
The \emph{extended clan} $C^\ast$ for a clan $C$ is obtained from the subgraph of $H$
induced by $C$ and its ghosts by removing the edges joining pairs of ghosts.

Let $G$ be a directed graph and let $(H,T)$ be an elder vineyard,
where the edges of $G$ and $H$ are colored by colors $\{0,1,\ldots, k\}$.
let $C$ be a clan
with ghosts $\ivl{g}{1}{m}$ listed in the increasing order by their distance from $r(C)$ in $T$
and let $v$ and $\ivl{w}{1}{m}$ be (not necessarily distinct) vertices of $G$.
Note that $g_1$ is the in-neighbor of $r(C)$ in $T$.
Let $\hom_{(H,T)}(C,v,\ivl{w}{1}{m},G)$ denote the number of homomorphisms from
$C^\ast$ to $G$ such that $r(C)$ maps to $v$ and $\ivl{g}{1}{m}$ map to $\ivl{w}{1}{m}$ in order.
Let $\hom((H,T),G,v)$ denote the number of homomorphisms from $H$ to $G$ such that
$r(T)$ maps to $v$.

\begin{thm} \label{thm:main}
Let $(H,T)$ be an elder vineyard with edges colored by $\{0,1,\ldots, k\}$
and let $D$ be an integer.
There exists a data structure $\AHom_{(H,T),k,D}(G)$ representing a directed graph $G$
with edges colored by $\{0,1,\ldots, k\}$ and maximum in-degree at most $D$
supporting the following operations in $\Oh(D^{|V(H)|^2})$ time.
\begin{enumerate}
\item Addition of an edge $e$ to $G$ such that the maximum indegree of $G+e$ is at most $D$.
\item Reorientation of an edge in $G$ such that the maximum indegree of the resulting graph is at most $D$.
\item Removal or recoloring of an edge.
\end{enumerate}
The data structure can be used to determine $\hom((H,T),G,v)$ for a vertex $v\in V(G)$, as well as
$\hom(H,G)$, in $\Oh(1)$.
The data structure can be built in time $\Oh(D^{|V(H)|^2+1}|V(G)|)$ and has space complexity $\Oh(D^{|V(H)|}|V(G)|)$.
\end{thm}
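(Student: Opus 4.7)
My plan is to maintain, for every vertex $u\in V(H)$, a table $M_u[v,w_1,\ldots,w_m]$ indexed by a choice of image $v\in V(G)$ for $u=r(C_u)$ (where $C_u=N^+_\infty(u)$) and by a tuple $(w_1,\ldots,w_m)$ of images of the ordered ghosts of $C_u$; the entry stores $\hom_{(H,T)}(C_u,v,w_1,\ldots,w_m,G)$. Lemma~\ref{lem:clans}(2) tells us every ghost is an in-neighbor of $u$ in $H$, so its image must be an in-neighbor of $v$ in $G$ along an edge of the correct color; this leaves at most $D$ choices per ghost and hence $\Oh(D^{|V(H)|})$ potentially nonzero entries per clan. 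With a constant number of clans this gives the claimed $\Oh(D^{|V(H)|}|V(G)|)$ space. Queries return in $\Oh(1)$ by also maintaining the running values $\hom((H,T),G,v)=M_{r(T)}[v]$ and $\hom(H,G)=\sum_v M_{r(T)}[v]$, updated incrementally with every modification.

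The recursion exploits that two children $u_i,u_j$ of the same vertex $u$ in $T$ cannot be adjacent in $H$: the vineyard property forces a directed $T$-path between the endpoints of any edge of $H$, and no such path exists between distinct siblings. Therefore, for the clan $C_u$ with children $u_1,\ldots,u_p$ of $u$ in $T$, a homomorphism of $C_u^*$ with prescribed $u\mapsto v$ and ghosts $\mapsto(w_1,\ldots,w_m)$ factors as independent choices, for each $i$, of an image $v_i\in V(G)$ for $u_i$ (with $(v,v_i)\in E(G)$ of the color of $uu_i$) together with a homomorphism of $C_{u_i}^*$ whose own ghost images are either $v$ or some of the $w_j$ (since every in-neighbor of $u_i$ in $H$ is an ancestor of $u_i$ in $T$, i.e., either $u$ or a ghost of $C_u$). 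Introducing the auxiliary sums
\[
A_{u_i}[v,\vec{w}\,']\;=\;\sum_{v_i} M_{u_i}[v_i,\vec{w}\,'],
\]
where the sum ranges over $v_i\in V(G)$ with $(v,v_i)\in E(G)$ of the correct color and $\vec{w}\,'$ codes the ghost images of $C_{u_i}$, we get the decomposition $M_u[v,\vec w]=\prod_i A_{u_i}[v,\vec{w}|_{u_i}]$, where $\vec{w}|_{u_i}$ restricts to the coordinates corresponding to ghosts of $C_{u_i}$.

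Dynamic updates propagate through these two tables bottom-up along $T$. When an edge $(a,b)$ of $G$ is modified, the leaf-clan entries $M_u[b,\cdot]$ are affected only for ghost tuples involving $a$. Then, cascading: a change in $M_{u_i}[b,\vec{w}\,']$ by $\delta$ affects $A_{u_i}[v,\vec{w}\,']$ only for $v\in N^-_G(b)$ (at most $D$ vertices), plus a single correction at $v=a$ from the changed edge itself; and a change in $A_{u_i}[v,\vec{w}\,']$ by $\delta$ affects $M_u[v,\vec w]$ for each extension of $\vec{w}\,'$ to the remaining ghost coordinates of $C_u$, at most $\Oh(D^{|V(H)|})$ entries whose new values read off in $\Oh(1)$ from the stored factors $A_{u_j}[v,\vec{w}|_{u_j}]$ with $j\neq i$. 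Each of the at most $|V(H)|$ levels of $T$ multiplies the set of touched entries by $\Oh(D^{|V(H)|})$, matching the $\Oh(D^{|V(H)|^2})$ bound per operation. Initialization runs the same propagation once from the leaves, with one extra sweep of $\Oh(D|V(G)|)$ to build the $A_{u_i}$ sums from scratch, giving $\Oh(D^{|V(H)|^2+1}|V(G)|)$.

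The main technical obstacle I anticipate is the careful bookkeeping of the ghost coordinates when passing between a clan and its sub-clans: the ghost set of $C_{u_i}$ is a specific subset of $\{u\}\cup\mathrm{ghosts}(C_u)$ that depends on $i$, so one must define consistent restriction and extension operators $\vec w\mapsto\vec{w}|_{u_i}$ so that incremental changes at a sub-clan aggregate faithfully into the parent clan, including the delicate case in which several ghosts collapse onto a single vertex of $G$. Once this is pinned down, the telescoping of $\prod_i A_{u_i}$ under a single-coordinate change via stored old factors, and the confirmation that each level only ever touches $\Oh(D^{|V(H)|})$ entries, are routine if tedious verifications.
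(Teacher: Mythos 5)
Your stored tables are close to the paper's: your $A_{u_i}$ coincides with the paper's $S(C_{u_i},\cdot)$, and you additionally store the un-summed $M_u$; the update mechanism (a bottom-up cascade along $T$) genuinely differs from the paper's top-down pass that enumerates, for each clan, the partial homomorphisms on the vertices reaching the heads of the edges mapped to $(x,y)$. Both give the same $\Oh(D^{|V(H)|^2})$ bound, but your decomposition has a gap that also breaks the cascade.

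The factorization $M_u[v,\vec w]=\prod_i A_{u_i}[v,\vec w|_{u_i}]$ omits the edges of $C_u^*$ from the ghosts of $C_u$ into the root $u=r(C_u)$ itself. By Lemma~\ref{lem:clans}(2) every ghost $g_j$ of $C_u$ contributes an edge $(g_j,u)\in E(C_u^*)$, and this edge lies in \emph{none} of the sub-extended-clans $C_{u_i}^*$: there it would join two ghosts and is deleted, or its tail $g_j$ is not even a vertex of $C_{u_i}^*$. The correct identity is
$$M_u[v,\vec w]=\Bigl(\prod_{j} \bigl[(w_j,v)\in E(G)\text{ and has the color of }(g_j,u)\bigr]\Bigr)\cdot\prod_i A_{u_i}[v,\vec w|_{u_i}],$$
and you in fact use exactly these ghost-to-root edges to bound the number of nonzero entries per $v$ by $D^{|V(H)|}$, yet drop them from the recursion. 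The concrete consequence for your update: the cascade misses a class of direct changes. If the modified edge $(a,b)$ of $G$ realizes a ghost-to-root edge $(g_j,u)\mapsto(a,b)$ of a \emph{non-leaf} clan $C_u$, then $M_u[b,\vec w]$ with $w_j=a$ jumps from $0$ to the product value even though no child $A_{u_i}$ changed, so a bottom-up pass that seeds direct updates only at leaf clans never visits that entry and the stored counts drift away from the truth. The fix is to seed a direct update at \emph{every} clan whose ghost-to-root edge colors match $(a,b)$ (this adds only $\Oh(D^{|V(H)|})$ touched entries per clan and cascades as before, so the $\Oh(D^{|V(H)|^2})$ bound survives), but without it the structure is incorrect as written.
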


\begin{proof}
We store the following information:
\begin{itemize}
\item For each clan $C\neq V(H)$ with $m$ ghosts and each $m$-tuple of vertices $\ivl{w}{1}{m}$ of $G$
we record the number $$S(C,\ivl{w}{1}{m})=\sum_{v\in N^+_1(w_1)} \hom_{(H,T)}(C,v,\ivl{w}{1}{m}),$$
that is the number of homomorphisms of $C^\ast$ to $G$ such that the ghosts of $C$ map to $\ivl{w}{1}{m}$ and $r(C)$
maps to some outneighbor $v$ of $w_1$.
\item For each $v\in V(G)$, the number $\hom((H,T),G,v)$.
\item The sum $\hom(H,G)$ of these numbers over all vertices of $G$.
\end{itemize}
The number $S(C,\ivl{w}{1}{m})$ is only stored for those combinations of $C$ and $\ivl{w}{1}{m}$ for that it is non-zero.
The values are stored in a hash table (see e.g.~\cite{4algs} for implementation details), so that they can be accessed in a constant time.
By Lemma~\ref{lem:clans}, if $\hom_{(H,T)}(C,v,\ivl{w}{1}{m})$ is non-zero, then $w_1$, \ldots, $w_m$ are in-neighbors of
$v$ in $G$.  Since the maximum indegree of $G$ is at most $D$, each vertex $v$ contributes at most $D^{|V(H)|}$ non-zero values
(and each of the numbers is smaller or equal to $|V(G)|^{|V(H)|}$), thus the space necessary for the storage is $\Oh(D^{|V(H)|}|V(G)|)$.
Queries can be performed in a constant time by returning the stored information.

The addition of an edge $(x,y)$ to $G$ is implemented as follows.  We process the clans of $(H,T)$
in the decreasing order of size, i.e., when we use the information stored for the smaller clans,
it still refers to the graph $G$ without the new edge.  Let us consider a clan $C\neq V(H)$ with ghosts $\ivl{g}{1}{m}$.
For each non-empty set $X$ of edges of $C^\ast$ which have the same color as $(x,y)$, we are going to find
all vertices $v$ and $\ivl{w}{1}{m}$ such that there exists a homomorphism of $C^\ast$ mapping
$r(C)$ to $v$ and the ghosts of $C$ to $\ivl{w}{1}{m}$ which maps precisely the edges of $X$ to $(x,y)$.
We will also determine the numbers of such homomorphisms, and decrease the number $S(C,\ivl{w}{1}{m})$
by this amount.  Note that the number of choices of $X$ is constant (bounded by a function of $H$).

Consider now a fixed set $X$.  Let $M$ be the set of vertices $z\in V(C^\ast)$
such that there exists a directed path in $C^\ast$ from $z$ to the head of an edge of $X$.
Note that $r(C)$ and all ghosts of $C$ belong to $M$.  Let $C_1$, \ldots, $C_t$ be the
vertex sets of connected components of $C^\ast-M$, and observe that they are clans.
Now, let ${\cal F}$ be the set of all homomorphisms from the subgraph of $C^\ast$ induced by $M$ to $G+(x,y)$
such that exactly the edges of $X$ are mapped to $(x,y)$.  Note that if $z$ is an image of a vertex of $M$
in such a homomorphism, then $G$ contains a directed path from $z$ to $y$ of length at most $|V(H)|$,
thus there are only $\Oh(D^{|V(H)|})$ vertices of $G$ to that $M$ can map, and consequently only
$\Oh(D^{|V(H)|^2})$ choices for the homomorphisms.  Each such choice fixes the image of $r(C)$ as well
as all the ghosts.

Consider $\phi\in {\cal F}$.  We need to determine in how many ways $\phi$
extends to a homomorphism of $C^\ast$ that maps no further edges to $(x,y)$ (this number is then
added to the value $S(C,\phi(g_1),\ldots, \phi(g_m))$).  Note that for $1\le i\le t$,
the ghosts of $C_i$ are contained in $M$, and thus their images are fixed by the choice of $\phi$.
Therefore, if $\ivl{g^i}{1}{m_i}$ are the ghosts of $C_i$, then the number of the homomorphisms extending $\phi$
is $$\prod_{i=1}^t S(C_i,\phi(g^i_1),\ldots, \phi(g^i_{m_i})).$$
Here, we use the fact that the values $S(C_i,\ldots)$ were not updated yet, and thus in the homomorphisms
that we count, no other edge maps to $(x,y)$.  These products can be determined in a constant time.

The values $\hom((H,T),G,v)$ are updated similarly, before the values $S(C,\ldots)$ are updated.
The changes in the values of $\hom((H,T),G,v)$ are also propagated to the stored value of $\hom(H,G)$.
The complexity of the update is given by the number of choices of partial homomorphisms ${\cal F}$,
i.e., $\Oh(D^{|V(H)|^2})$.

Edge removal works in the same manner, except that the information is subtracted in the end, and
that the clans are processed in the opposite direction, i.e., starting from the inclusion-wise smallest clans,
so that the values for the graph without the edge are used in the computations.

Change of the orientation of an edge or its recoloring can be implemented as subsequent deletion and addition.
The data structure can be initialized by adding edges one by one, starting with the data structure for
an empty graph $G$ whose initialization is trivial.
\end{proof}

\subsection{Induced subgraphs}\label{subsec-complex}

The data structure $\ISub_{H,k}(G)$ consists essentially of the data structure for maintaining
the $h$-th augmentation $G'$ of $G$ (where $h=\binom{|V(H)|}{2}-2$) and of a constant (bounded by a function of $k$ and $H$) number
of data structures $\AHom$, to that we have to propagate all the changes in $G'$.  Therefore,
if $D$ is the bound from the data structure from Theorem~\ref{thm:aug}, then
the data structure $\ISub_{H,k}(G)$ has the following complexities (amortized).
\begin{itemize}
\item Edge addition: $\Oh(D^{|V(H)|^2+1}\log^{\binom{|V(H)}{2}-1} |V(G)|)$.
\item Edge removal: $\Oh(D^{|V(H)|^2+1})$.
\item Edge recoloring: $\Oh(D^{|V(H)|^2})$.
\item Initialization: $\Oh(D^{|V(H)|^2+1}|V(G)|+t)$.
\item Space: $\Oh(D^{|V(H)|^2}|V(G)|)$.
\end{itemize}

For dense graphs, the bound on $D$ is too large for the data structure to be useful.
However, if $G$ is kept within some class $\G$ of graphs with bounded expansion, then the function
$h(n,r)$ from Theorem~\ref{thm:aug} can be chosen to be constant, and we obtain $D$ constant.
Therefore, when applied to such a class of graphs, the complexities are as follows.
\begin{itemize}
\item Edge addition: $\Oh(\log^{\binom{|V(H)}{2}-1} |V(G)|)$.
\item Edge removal and recoloring: $\Oh(1)$.
\item Initialization: $\Oh(|V(G)|)$.
\item Space: $\Oh(|V(G)|)$.
\end{itemize}

Similarly, if $\G$ is nowhere-dense, then the function $h(n,r)$ is $\Oh(n^{\eps'})$ for
any fixed $r$ and any $\eps'>0$.  Therefore, given any $\eps>0$, we can choose $\eps'$ to be
less than $\eps/(|V(H)|^2+1)$ and the complexities of the data structure are as follows.

\begin{itemize}
\item Edge addition, removal and recoloring: $\Oh(|V(G)|^\eps)$.
\item Initialization and space: $\Oh(|V(G)|^{1+\eps})$.
\end{itemize}

\section{Extensions} \label{sec:logi}
Although we have for simplicity formulated the data structure $\ISub$ for a graph $G$ with
a fixed vertex set, there is no problem with adding or removing isolated vertices to/from $G$
in a constant time.

One can ask about a number of possible extensions to the data structure $\ISub$.  Can we allow
directed edges? Or colors of vertices? Or hyperedges?
All these can be expressed as relational structures.
By \emph{dictionary} $\sigma$ we mean a finite set of symbols along with finite arities.
\emph{Relational structure} $S$ for a given dictionary $\sigma$ is a set $V(S)$ called
the \emph{universe} of $S$ along with the realization of the relations, i.e., for every $R$ of
arity $r$ in $\sigma$ the set $R^S\subseteq V(S)^r$ of $r$-tuples that satisfy the relation.
As an example, a graph is a relational structure with a single binary relation that
is satisfied for the pairs of vertices joined by edges.  Colors of vertices and edges can be represented
by additional unary and binary relations, respectively.

We define $|S|$ as $|V(S)| + \sum_{R\in \sigma} |R^S|$.
For $\ivl{x}{1}{r} \in V(S)^r$ we use the shorthand $R(\ivl{x}{1}{r})$ for $(\ivl{x}{1}{r})\in R^S$ when $S$ is clear from the context.
The \emph{Gaifman graph} of a structure $S$ is the undirected graph $G_S$ with $V(G_S) = V(S)$ and an edge between two distinct vertices $a,\ b\in V(G_S)$ if there exist a relation $R\in \sigma$ of arity $r$ and a tuple $\ivl{x}{1}{r}\in V(S)^{r}$ such that $R(\ivl{x}{1}{r})$ and $a,\ b\in \ivl{x}{1}{r}$.
The \emph{incidence graph} of a structure $S$ with dictionary $\sigma$ is the bipartite undirected graph $G^i_S$
with edges colored by $\sigma\cup \{c\}$, where $c$ is a color not contained $\sigma$, defined as follows.
The vertex set consists of $V(S)$ and of vertices $v_{R,\ivl{x}{1}{k}}$ and $v'_{R,\ivl{x}{1}{k}}$
for each $k$-ary relational symbol $R$ and $k$-tuple $\ivl{x}{1}{k}$ such that $R(\ivl{x}{1}{k})$ holds.
For each such $R$ and $k$-tuple, $G^i_S$ contains an edge $v_{R,\ivl{x}{1}{k}}v'_{R,\ivl{x}{1}{k}}$ colored by $R$
and edges $x_iv_{R,\ivl{x}{1}{k}}$ for $1\le i\le k$ colored by $c$.

A class of structures $\mathcal{S}$ is said to have bounded expansion (be nowhere dense), if the
class $\{G_S;\ S\in \mathcal{S}\}$ has bounded expansion (is nowhere dense, respectively).
Note that in such a case, the maximum average degree of $G_S$ is bounded by a constant ($|V(S)|^\eps$ for every $\eps>0$,
respectively).  Consequently, the number of cliques in $G_S$ of size bounded by the maximum arity of
a relation symbol of $S$ is $\Oh(|V(S)|)$ ($\Oh(|V(S)|)^{1+\eps}$ for every $\eps>0$, respectively), see~\cite{wood-cliques}.
It follows that we have $|S| = \Oh(|V(G)|)$ ($\Oh(|V(S)|)^{1+\eps}$ for every $\eps>0$, respectively).

\begin{lemma}
If a class of structures $\mathcal{S}$ has bounded expansion (is nowhere-dense), then the class $\mathcal{S}^i = \{G_S^i;\ S\in \mathcal{S}\}$ has bounded expansion (is nowhere dense, respectively).
\end{lemma}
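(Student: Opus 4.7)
The plan is to bound $\nabla_r(G_S^i)$ by a function of $\nabla_{\Oh(r)}(G_S)$, $r$, and the maximum arity $k$ of a symbol of $\sigma$ (a fixed constant). The bounded-expansion conclusion follows immediately; the nowhere-dense conclusion follows because, as noted in the paragraph preceding the lemma, $|V(G_S^i)|$ is polynomially related to $|V(S)|$, so an estimate $\nabla_{\Oh(r)}(G_S)\le \Oh(|V(S)|^{\eps})$ translates to $\nabla_r(G_S^i)\le \Oh(|V(G_S^i)|^{\eps})$ for every $\eps>0$.

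First I would eliminate the pendant vertices $v'_{R,\bar x}$: each has degree $1$, so removing all of them changes $\nabla_r$ by at most an additive constant. Let $H_S$ be the resulting bipartite graph on $V(S)\cup\{v_{R,\bar x}:R(\bar x)\text{ holds}\}$ in which each tuple vertex $v_{R,\bar x}$ is adjacent exactly to the entries of $\bar x$. Tuple vertices have degree at most $k$ and form an independent set. Fix a shallow minor $H$ of $H_S$ at depth $r$ given by disjoint connected branch sets $\{B_u:u\in V(H)\}$ of radius at most $r$, and classify $u\in V(H)$ as \emph{original} if $B_u\cap V(S)\ne\emptyset$ and as \emph{auxiliary} otherwise. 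Since tuple vertices are pairwise non-adjacent, every auxiliary branch set consists of a single vertex $v_{R,\bar x}$, contributing at most $k$ edges to $H$; hence the auxiliaries together contribute at most $k\,|V(H)|$ edges.

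To handle edges of $H$ between original branch sets, I would use the following projection principle: any path $z_0z_1\ldots z_\ell$ in $H_S$ with $z_0,z_\ell\in V(S)$ automatically alternates between $V(S)$ and tuple vertices, and for each $i$ the vertices $z_{2i}$ and $z_{2i+2}$ lie in a common tuple and are therefore adjacent in $G_S$; so $z_0z_2\ldots z_\ell$ is a walk in $G_S$ of length $\ell/2$. In particular, for each edge $uu'\in E(H)$ between original branch sets, the witnessing $H_S$-edge $x_i v_{R,\bar x}$ (with $x_i\in B_u\cap V(S)$, say, and $v_{R,\bar x}\in B_{u'}$) forces some $x_j\in\bar x$ to lie in $B_{u'}$, so that $x_ix_j$ is an edge of $G_S$ with endpoints in the disjoint, nonempty sets $B_u\cap V(S)$ and $B_{u'}\cap V(S)$. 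Fixing centers $c_u\in B_u\cap V(S)$, every element of $B_u\cap V(S)$ is within $G_S$-distance $r$ of $c_u$, so each such $H$-edge lifts to a $G_S$-path of length at most $2r+1$ from $c_u$ to $c_{u'}$.

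The main obstacle is that these lifted paths (or equivalently, the expansions of the sets $B_u\cap V(S)$ to connected radius-$r$ subgraphs of $G_S$) need not be internally vertex-disjoint, and so do not directly witness a shallow minor of $G_S$. I would resolve this by passing through shallow topological minors: every shared internal vertex of two lifted paths lies in $B_u\cap V(S)$ for some $u$ shared by the two edges, which is the standard situation handled by \npom{}'s theorem that bounded expansion (respectively nowhere-denseness) defined via shallow minors coincides with the one defined via shallow topological minors; a routine disjointification then produces an honest shallow topological minor of $G_S$ at depth $\Oh(r)$ whose edge count matches that of $H$ restricted to originals, up to a multiplicative constant depending only on $r$ and $k$. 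Combining this with the auxiliary-branch-set bound yields an estimate $\nabla_r(G_S^i)\le f(r,k,\nabla_{\Oh(r)}(G_S))$ for a computable function $f$, completing the proof.
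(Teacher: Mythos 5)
Your route is genuinely different from the paper's: the paper argues by contradiction via the $\G\tnabla t$ characterization of bounded expansion (boundedness of the density of graphs whose $\le t$-subdivisions appear as subgraphs), whereas you bound $\nabla_r(G_S^i)$ directly in terms of $\nabla_r(G_S)$, $r$, and the maximum arity $k$. The direct route is viable, and your first three paragraphs --- discarding the degree-one pendant vertices, splitting branch sets into original and auxiliary, and the alternating-path projection from $H_S$ into $G_S$ --- are all correct.

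The last paragraph, however, raises an obstacle that is not actually there, and the repair you sketch is not justified. The sets $B_u\cap V(S)$ for original $u$ are already pairwise disjoint, and your own projection principle, applied to $H_S$-paths inside the connected branch set $B_u$, shows immediately that $G_S[B_u\cap V(S)]$ is connected and has $G_S$-radius at most $r$; together with the $G_S$-edge $x_ix_j$ you identified for each original--original $H$-edge, these sets are a minor model in $G_S$ at depth $r$ realizing $H$ restricted to the originals, so no disjointification is needed at all. The fallback you propose --- lifting each $H$-edge to a $c_u$--$c_{u'}$ path and invoking the \npom{} equivalence of shallow minors and shallow topological minors to ``routinely disjointify'' --- is not a routine step: if $B_u\cap V(S)$ is small while $u$ has high degree in $H$, many lifted paths are forced through a common vertex of $B_u\cap V(S)$ and cannot be rerouted into internally disjoint paths without discarding edges of $H$, so this does not produce a topological minor of the claimed density. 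Replacing the last paragraph by the direct branch-set observation gives the clean bound $\nabla_r(G_S^i)\le k+1+\nabla_r(G_S)$, and the lemma follows.
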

\begin{proof}
We present the proof for bounded expansion.  The argument for the nowhere-dense case is analogical.

For $t>0$ and a class $\G$ of graphs, let $\G\tnabla t$ denote the set of all graphs $G$ such that
there exists a graph $G'\in \G$ and a graph obtained from $G$ by subdividing each edge at most $t$
times is a subgraph of $G'$.  As was shown in~\cite{subdivchar}, $\G$ has bounded expansion
if and only if for every $t>0$, there exists a constant $c_t$ such that all graphs in $\G\tnabla t$
have average degree at most $c_t$.

Suppose that there exists $t$ such that we can find arbitrarily dense graphs in $\mathcal{S}^i\tnabla t$.
Since $\mathcal{S}^i\tnabla t$ is closed on subgraphs, it also contains graphs of arbitrarily large minimum degree.
Let $k$ be the maximum arity of the symbols in the dictionary of $S$ and
let $H$ be a graph in $\mathcal{S}^i\tnabla t$ with minimum degree at least $k+2$,
and let $S\in\mathcal{S}$ be a relational structure such that a graph $H'$ obtained from $H$ by subdividing each edge
at most $t$ times appears as a subgraph of $G_S^i$.

The branching vertices of $H'$ in $G_S^i$ must correspond to the vertices of $V(S)$,
since all other vertices of $G_S^i$ have degree at most $k+1$.
However, for every path of length two in $G_S^i$ between two vertices $u$ and $v$ corresponding to vertices of $V(S)$,
there is an edge in $G_S$ between $u$ and $v$.  We conclude that $H\in \{G_S;\ S\in \mathcal{S}\} \tnabla \lceil t/2\rceil$.
Therefore, $\{G_S;\ S\in \mathcal{S}\} \tnabla \lceil t/2\rceil$ would contain graphs of arbitrarily large minimum degree,
contradicting the assumption that $\mathcal{S}$ has bounded expansion.
\end{proof}

To count the number of appearances of a fixed relational structure $S_0$ as an induced substructure of a relational structure $S$,
it suffices to count the number of appearances of $G^i_{S_0}$ in $G^i_S$ as an induced subgraph
(assuming that every vertex of $S_0$ belongs to at least one relation; the case that $S_0$ contains isolated vertices
can be dealt with by introducing a new unary relation satisfied for all vertices).
A change (addition or removal of a tuple to/from a relation) in $S$ results in only a constant number of changes in $G^i_S$, thus
$\ISub$ can be used to represent relational structures through this transformation.

A seemingly more general question is testing existential first order properties,
i.e., properties which can be defined by closed first-order formulas using only non-negated existential quantifiers.
Such a formula $\phi$ can be considered to be in the disjunctive normal form, i.e.
$$ \phi = \bigvee_{i=1}^t \ivl{\exists x}{1}{l} \phi_i, $$
where each $\phi_i$ is a conjunction of a finite number of terms of the form
$R(\ivl{x}{{i_1}}{{i_k}})$ or $\neg R(\ivl{x}{{i_1}}{{i_k}})$ for a relation $R\in \sigma\cup \{=\}$ of arity $k$.
For example, existence of an induced subgraph, subgraph or homomorphism from a graph $H$ of a bounded size can be expressed this way,
by a formula having one variable for each vertex of $H$ and describing the required adjacency, non-adjacency and non-equality
relations between them.

In order to decide whether a structure $S$ satisfies the given formula $\phi$, we need to find a set of witnessing vertices $\ivl{x}{1}{l}$ which satisfies the subformula $\phi_i$ for some $1\le i\le t$.
For every such $\phi_i$ there is only a finite number of structures $S_i$ on at most $l$ vertices which satisfy $\phi_i$.
Hence, it suffices to check whether one of these structures is an induced substructure of $S$.
Therefore, using the data structure $\ISub$, we can decide arbitrary existential first order properties with a bounded
number of variables on classes of structures with bounded expansion (or nowhere-dense), within the same time bounds.

\section{Concluding remarks}\label{sec:conc}

A natural question is whether one can design a fully dynamic data structure to
decide properties expressible in First Order Logic on graphs with bounded expansion.
For this purpose, it would be convenient to be able to maintain low tree-depth colorings of~\cite{npom-old},
which however appears to be difficult.

Possibly a much easier problem is the following.  We have described a dynamic data structure
that enables us to count the number of appearances of $H$ as an induced subgraph of $G$,
for graphs from a class with bounded expansion.  If this number is non-zero, can we find such
an appearance?  Getting this from our data structure is not entirely trivial, due to the use of
the principle of inclusion and exclusion.

By a famous result of Courcelle, any property expressible in Monadic Second Order Logic
can be tested for graphs of bounded tree-width in linear time.  Can one design
a dynamic data structure for this problem?  It is not even clear how to
maintain a tree decomposition of bounded width dynamically.

\bibliographystyle{siam}
\bibliography{subg_in_gwbe}

\end{document}